\newcommand{\del}{{\partial}}
\newcommand{\rhoconst}{{\rho}}
\newcommand{\epsconst}{{\varepsilon}}
\newtheorem{prop}{Proposition}
\newtheorem*{theorem}{Theorem}
\begin{document}
\title[MTTs generated from nonlinear scalar field data]{Marginally trapped tubes generated from nonlinear scalar field initial data}
\author{Catherine Williams} 
\address{Department of Mathematics, Stanford University, Stanford, CA 94305}
\email{cathwill@math.stanford.edu}      
\date{\today}          

\begin{abstract}We show that the maximal future development of asymptotically flat spherically symmetric black hole initial data for a self-gravitating nonlinear scalar field, also called a Higgs field, contains a connected, achronal, spherically symmetric marginally trapped tube which is asymptotic to the event horizon of the black hole, provided the initial data is sufficiently small and decays like $O(r^{-\frac{1}{2}})$, and the potential function $V$ is nonnegative with bounded second derivative. 
This result can be loosely interpreted as a statement about the stability of `nice' asymptotic behavior of marginally trapped tubes under certain small perturbations of Schwarzschild.  
\end{abstract}

\maketitle

\section{Introduction}

Black holes lie at the core of many current efforts to further our understanding of gravitation.
Questions surrounding their existence and properties are central to  some of the most significant open problems in mathematical relativity,  
while a major focus in numerical relativity is locating evolving black holes in simulations,
and considerable research in physics communities is dedicated to carrying over concepts from quantum mechanics to black hole regimes.
In all of these contexts, certain spacetime hypersurfaces known as marginally trapped tubes (MTTs) play an important role.  
(Marginally trapped tubes of certain causal characters are often referred to as dynamical or isolated horizons in the physics literature.) 
On one hand, mathematically, these hypersurfaces generally lie inside of black holes and can be thought of as forming boundaries between the regions of weak and strong gravitational fields. 
Understanding their behavior thus sheds some light on the portions of black holes' interiors in which singularities and/or Cauchy horizons may form.
On the other hand, numerical relativists and physicists, e.g.\ those developing loop quantum gravity, have largely set aside the traditionally-defined event horizon and instead use MTTs as models of surfaces of black holes \cite{SKB, A}; the advantage of the latter is primarily that they are defined quasi-locally, whereas the former notion requires global information.
In any case, whether interpreting MTTs as black hole surfaces or just interesting structures inside them, it is useful to characterize their long term behavior and its relationship with classical event horizons.

There is a general expectation that MTTs that form during gravitational collapse become spacelike or null and asymptotically approach the event horizon.
Essentially nothing is currently known about the asymptotic behavior of MTTs in general, however, i.e.\ without symmetry.
Indeed, it follows from the existence results of Andersson, Mars, and Simon \cite{AMS, AMS2} that a general black hole spacetime may contain uncountably many distinct MTTs, and in fact, there may be open sets of points having the property that \emph{each point}  lies on  uncountably many distinct MTTs.  
It is an open problem whether any one of these MTTs need be asymptotic to the event horizon, much less all of them.
Imposing spherical symmetry, however, the problem becomes simpler, since each point can lie in at most one spherically symmetric MTT, i.e.\ one foliated by round two-spheres.
We can then ask whether this particular MTT has the expected ``nice" asymptotic behavior.
Henceforth we shall focus our attention solely on such spherically symmetric MTTs.

Even in spherical symmetry, a major problem in trying to compare the asymptotic behavior of MTTs with classical event horizons is that the latter are difficult to locate in general, since their definition requires global information. 
One can start by examining known exact black hole solutions to Einstein's equations, but unfortunately this list of spacetimes is rather quickly exhausted. 
The exact spherically symmetric black hole solutions are the Schwarzschild, Reisner-Nordstr\"om, and Vaidya spacetimes, whose matter models are vacuum, electro-vacuum, and ingoing null dust, respectively. 
In all three of these, the (spherically symmetric) MTTs do exhibit the nice asymptotic behavior just described. 
In fact, in Schwarzschild and Reisner-Nordstr\"om, the MTTs coincide exactly with the black hole event horizons.
In Vaidya, provided the dominant energy condition holds, the MTT is achronal and is either asymptotic to or eventually coincides with the event horizon \cite{A, Wil}.

General spherically symmetric black hole spacetimes satisfying only the dominant energy condition were considered in \cite{Wil}, and it was shown there that if four particular inequalities involving the metric and stress-energy tensor are satisfied in a certain small region near future timelike infinity $i^+$, the future limit point of the event horizon, then the black hole contains an MTT which eventually becomes achronal and asymptotically approaches the event horizon.
It was also shown that these inequalities are satisfied in two families of self-gravitating Higgs field black holes.
The Higgs field matter model is generated by a scalar field $\phi$ satisfying the nonlinear (coupled) wave equation $\Box_g \phi = V^\prime(\phi)$, where $V$ is some given potential function; by Higgs field black holes, we mean spacetimes which are assumed \emph{a priori} to contain black holes.
In particular, the hypotheses of the general theorem were shown to hold provided that certain quantities involving the metric, $\phi$, and $V(\phi)$ either: one, decay at specified rates along the event horizon, or two, satisfy several rather restrictive monotonicity properties along the event horizon at late times.

Given a particular matter model, an alternate, more physical strategy is to recast the Einstein equations as an initial value problem and generate spacetimes from initial data, then locate any black holes and look for MTTs inside them. 
This program has been successfully carried out for several matter models.
In particular, the maximal development of spherically symmetric asymptotically flat initial data for the Einstein equations coupled with a scalar field, the Maxwell equations and a real scalar field, or the Vlasov equation (describing a collisionless gas) does indeed contain an
MTT which is asymptotic to the event horizon \cite{C2}, \cite{D1, DRod}, \cite{DR}.
Furthermore, in the scalar field cases, the MTT is necessarily achronal.

In this paper we reconsider the problem for Higgs field spacetimes, this time taking the latter approach and beginning with asymptotically flat spherically symmetric Higgs field initial data. 
In order to circumvent the difficulties associated with the global existence problem for the Einstein equations, we rely on the general results of \cite{D2} to tell us that as long as our data contains a trapped surface, contains no weakly anti-trapped surfaces,  and generates a spacetime with nonempty future null infinity, $\mathcal{I}^+ \neq \emptyset$, then the maximal development of the data will contain a black hole.
We make two main additional assumptions: one, that the scalar field decays to some limiting value like $O(r^{-\frac{1}{2}})$, where $r$ is a radial coordinate which tends to infinity on the asymptotically flat end; and two, that the scalar field and the mass flux are sufficiently small outside of the outermost marginally trapped sphere. 
We then show that the black hole generated by these initial data contains an achronal, spherically symmetric MTT which is asymptotic to the event horizon, and moreover, this MTT is smooth and connected, intersecting the initial hypersurface at precisely the outermost marginally trapped surface.

A few remarks are in order.
First, this result is considerably stronger than those of \cite{Wil}, since the Higgs field black hole generated under the assumptions just described need not satisfy either set of conditions along the event horizon mentioned previously; the latter are much more restrictive.
Secondly, it should be emphasized that, unlike previous results, here we locate the whole (connected) future development of the MTT emanating  from the Cauchy surface, rather than just a small portion of it near $i^+$.
And thirdly, this result can be loosely interpreted as a statement about certain small perturbations of Schwarzschild.  
Suppose we take a spherically symmetric, spacelike slice of a Schwarzschild spacetime which extends from an inner boundary inside the black hole region out to spacelike infinity $i^0$, and suppose we perturb the vacuum metric on this slice to one given by a very small, decaying Higgs field, say with compact support.
Then the result of this paper says that, while the perturbed MTT may not coincide exactly with the event horizon as it does in Schwarzschild, it will still be achronal and connected and will asymptotically approach the event horizon from inside the black hole.

%
%
%
%

\section{Preliminaries}\label{prelim}

The spacetimes we wish to consider are characterized by the following four properties: their matter is described by a Higgs field matter model satisfying the dominant energy condition; they are spherically symmetric; they arise evolutionarily, i.e.\ as maximal future developments of asymptotically flat initial Cauchy data prescribed on spacelike hypersurfaces; and the Cauchy data from which they arise is physically reasonable black hole data.  We describe the details of each of these requirements below.  It should be noted that we do not deal with the Cauchy problem and the issue of specifying initial data \emph{per se}.  Instead we posit those conditions which are necessary to address the situation at hand, i.e.\ the evolution of a spherically symmetric black hole, and then in the statement of the main theorem identify further, more restrictive conditions on the data which guarantee the existence and desired behavior of a unique spherically symmetric marginally trapped tube inside the black hole.

%
%
%

\subsection{Higgs field matter model \& dominant energy condition}\label{mmDEC}
The Higgs field matter model on a spacetime $(\mathcal{M},g)$ consists of a scalar function $\phi \in C^2(\mathcal{M})$ and a potential function $V(\phi)$, $V \in C^2(\mathbb{R})$,  such that
\begin{equation}\label{eeq} R_{\alpha \beta} - \textstyle \frac{1}{2} R g_{\alpha \beta} = 2 T_{\alpha \beta} \end{equation} 
\begin{equation}\label{sfT} T_{\alpha \beta} = \phi_{; \alpha} \phi_{; \beta} - \left( \textstyle\frac{1}{2} \phi_{;\gamma} \phi^{;\gamma} + V(\phi) \right) g_{\alpha \beta} \end{equation}
and
\begin{equation}\label{sfeq} \Box_g \phi = g^{\alpha \beta}\phi_{;\alpha \beta}  = V^\prime(\phi). \end{equation}
The dominant energy condition stipulates that, for a vector field $\xi^\alpha$ on $\mathcal{M}$,
$-T^\alpha_\beta \xi^\beta$ is future causal wherever $\xi^\alpha$  is future causal.
Using the spherical symmetry of the metric described below, namely the decomposition of the metric given by \eqref{metric}, one readily computes that the dominant energy condition is satisfied if and only if $ V(\phi) \geq 0$ everywhere on $\mathcal{M}$.  We therefore assume \emph{a priori} that the potential function $V$ is an everywhere nonnegative function of its argument.

%
%
%

\subsection{Spherical symmetry}
A self-gravitating Higgs field spacetime (or Cauchy surface) is said to be spherically symmetric if the Lie group $SO(3)$ acts on it by isometries under which $\phi$ remains invariant, with orbits which are either spacelike two-spheres or fixed points.  We shall in fact make a slightly stronger assumption, that in addition to $\mathcal{M}$ admitting such an $SO(3)$-action, the quotient $\mathcal{Q} = \mathcal{M}/SO(3)$ inherits from $\mathcal{M}$ the structure of a 1+1-dimensional Lorentzian manifold, possibly with boundary, with metric $\overline{g}$ such that
\begin{equation} g = \overline{g} + r^2 \gamma. \label{metric}\end{equation}
Here $\gamma$ is the usual round metric on $S^2$, and $r$ is a smooth nonnegative function on $\mathcal{Q}$, called the \emph{area-radius},  whose value at each point is proportional to the square root of the area of the corresponding two-sphere upstairs in $\mathcal{M}$.  Since it is preserved by the $SO(3)$-action, the function $\phi$ descends to a function $\mathcal{Q}$.  The advantage of such an assumption on the set $\mathcal{Q}$ is that such features of $(\mathcal{M},g)$ as black holes, event horizons, and spherically symmetric marginally trapped tubes are preserved and may be studied at this quotient level. 

We further assume that $\mathcal{Q}$ admits a conformal embedding into a subset of 2-dimensional Minkowski space $\mathbb{M}^{1+1}$.  Such an embedding preserves causal structure, so identifying $\mathcal{Q}$ with its image under this embedding, we make use of the usual global double-null coordinates $u$, $v$ on $\mathbb{M}^{1+1}$ to write 
\[ \overline{g} = - \Omega^2 du dv, \]
where $\Omega = \Omega(u,v) > 0$ on $\mathcal{Q}$.  
(In our Penrose diagrams, we will always depict the positive $u$- and $v$-axes at $135^\circ$ and $45^\circ$ from the usual positive $x$-axis, respectively.) 
Since now $\phi = \phi(u,v)$ and $r = r(u,v)$ as well, 
we can rewrite \eqref{eeq}-\eqref{sfeq} as a system of pointwise equations on $\mathcal{Q}$:
\begin{eqnarray}  
\del_u(\Omega^{-2} \del_u r) & = & - r \Omega^{-2} (\del_u\phi)^2  \label{eq1}\\
\del_v(\Omega^{-2} \del_v r) & = & - r \Omega^{-2} (\del_v\phi)^2 \label{eq2}\\
\del_u m & = & r^2 \left( V(\phi) \del_u r - 2 \Omega^{-2} (\del_u\phi)^2 \del_v r \right) \label{eq3}\\
\del_v m & = & r^2 \left( V(\phi) \del_v r - 2 \Omega^{-2} (\del_v\phi)^2 \del_u r \right) \label{eq4}  
\end{eqnarray}
and
\begin{equation}\label{sfequv} V^\prime(\phi) = -4\Omega^{-2} \left( \del^2_{uv}\phi + \del_u\phi\,(\del_v \log r) + \del_v\phi\,(\del_u \log r) \right), \end{equation}
where
\begin{equation} m = m(u,v) = \frac{r}{2}\left( 1 - \overline{g}\left( \nabla r, \nabla r \right) \right) = \frac{r}{2}(1 + 4\Omega^{-2} \del_u r \del_v r) \label{mdef} \end{equation}
is the \emph{Hawking mass}.
Note that the null constraints (\ref{eq1}) and (\ref{eq2}) are just Raychaudhuri's equation applied to each of the two null directions in $\mathcal{Q}$.  
Since we can pass back and forth between $(\mathcal{M}, g, \phi)$  and $(\mathcal{Q}, \Omega, r, \phi)$ without losing information, we may work directly on $\mathcal{Q}$ without any loss of generality.

%
%
%

\subsection{Evolution from asymptotically flat initial data}\label{AFevol}
Next we assume that our spacetime arises as the future evolution of initial data for the Einstein-Higgs system \eqref{eeq}-\eqref{sfeq}.  
In particular, we require that $(\mathcal{M},g)$ be the maximal future Cauchy development of initial data prescribed on a spherically symmetric spacelike hypersurface $\Sigma \subset \mathcal{M}$.
Since $\Sigma$ is preserved under the $SO(3)$-action, it descends to a spacelike curve $\mathcal{S} \subset \mathcal{Q}$, and our assumption upstairs implies that on the quotient level we have $\mathcal{Q} = D^+(\mathcal{S})$; in particular, $\mathcal{S}$ is the past boundary of $\mathcal{Q}$. 
The remaining assumptions on initial data may be formulated directly on $\mathcal{S}$.

We further assume that the initial hypersurface $\mathcal{S}$ has at least one asymptotically flat end, and for simplicity, we focus our attention on a single such end.   
Here asymptotic flatness means first of all that $\mathcal{S}$ is a connected spacelike curve along which $r \rightarrow \infty$ in one direction, say the direction of increasing $v$; the other end may or may not have boundary.
Without loss of generality we assume that $r$ is strictly positive on $\mathcal{S}$.
Moreover, we require that the metric and stress-energy tensor approach the Euclidean and vacuum ones, respectively, as $r \rightarrow \infty$ along $\mathcal{S}$.
By inspection of \eqref{sfT}, it follows that for Higgs initial data, $\nabla \phi$ and  $V(\phi) \rightarrow 0$ along $\mathcal{S}$.
We therefore assume that $\phi \rightarrow \phi_+$ along $\mathcal{S}$ and that $V(\phi_+) = 0$, where the limiting value $\phi_+$ is some finite constant.  
From our assumption in Section \ref{mmDEC} that $V \geq 0$ uniformly, it then follows that $V^\prime(\phi_+) = 0$ as well.   
Lastly, we assume that the Hawking mass $m$ is uniformly bounded along $\mathcal{S}$.

We remark that the requirement that $\phi$ have a finite limit along $\mathcal{S}$ results in a slight loss of generality, since we are excluding possibilities such as $\phi \sim \log r$ with $\lim_{x \rightarrow \infty} V(x) = 0$. 
However, this drawback is outweighed by the advantage of being able to estimate $V(\phi)$ and $V^\prime(\phi)$ in terms of $\phi$ using the mean value theorem.

For definiteness let us assume that $\mathcal{Q} \subset [0, u_0] \times [v_0, \infty) \subset \mathbb{M}^{1+1}$, some $u_0, v_0 > 0$, and that 
in particular $v \rightarrow \infty$ along $\mathcal{S}$.
We abuse notation somewhat to allow $\overline{\mathcal{Q}}$ to include points along the ``ray" $[0, u_0] \times \{ \infty \}$, so as to be able to refer to points ``at infinity," e.g. spacelike infinity $i^0$,  future timelike infinity $i^+$, and future null infinity $\mathcal{I}^+$.
(We could achieve the same end more rigorously by requiring our embedding $\mathcal{Q} \hookrightarrow \mathbb{M}^{1+1}$ to be a conformal compactification, but the certain core elements of the proof of the theorem are clearer with $v$ scaled to have infinite range on $\mathcal{Q}$.)
Constant-$u$ curves are said to be \emph{outgoing} and constant-$v$ ones \emph{ingoing}.

%
%
%

\subsection{Physically reasonable black hole data} \label{BHdata}
Finally we suppose that the asymptotically flat initial hypersurface $\mathcal{S}$ is equipped with physically reasonable black hole initial data.  
By this we mean three things:  one, that $\mathcal{S}$ should contain at least one (spherically symmetric) closed trapped surface; two, that $\mathcal{S}$ should \emph{not} contain any (spherically symmetric) weakly anti-trapped surfaces; and three, that the data should decay sufficiently rapidly toward the asymptotically flat end to insure that future null infinity $\mathcal{I}^+ \neq \emptyset$, where $\mathcal{I}^+$ is defined as in \cite{D2}.  
We introduce relevant definitions and discuss the first two of these requirements in greater detail in the next section.  
As for the third assumption, we do not address here the issue of precisely what rate of decay is sufficiently rapid to insure that future null infinity $\mathcal{I}^+ \neq \emptyset$, we simply assume our initial data has this property.  
Alternately, we could avoid all discussion of decay by taking $\phi$ and $\nabla\phi$ to be compactly supported on $\mathcal{S}$ and assuming $V(0) = 0$; Birkhoff's theorem \cite{HE} would then imply that our spacetime contains a region isometric to an exterior region of the Schwarzschild spacetime and would thus guarantee that $\mathcal{I}^+ \neq \emptyset$.

With these assumptions in place, it now follows from \cite{D2} that $\mathcal{Q}$ contains a black hole region.  
In particular, $\mathcal{I}^+$ is complete with future limit point $i^+$ and past limit point $i^0$, the black hole region is $\mathcal{B} := \mathcal{Q}\setminus J^-(\mathcal{I}^+)$, and its event horizon is $\mathcal{H} := \del\left( \mathcal{Q}\setminus J^-(\mathcal{I}^+) \right) \cap \mathcal{Q}$. 
(Here and henceforth, set boundaries and closures are to be taken with respect to the topology of $\mathbb{M}^{1+1}$ rather than the relative topology of $\mathcal{Q}$.)
See Figure \ref{setup1} for a representative Penrose diagram.  
The event horizon necessarily has the property that $\sup_{p \in \mathcal{H}} r(p) = r_+ < \infty$.

\begin{figure}[hbtp]
\begin{center}
{
\psfrag{trapped}{\small{trapped surface}}
\psfrag{H+}{\small{$\mathcal{H}$}}
\psfrag{S}{\small{$\mathcal{S}$}}
\psfrag{i0}{\small{$i^0$}}
\psfrag{i+}{\small{$i^+$}}
\psfrag{scri}{\small{$\mathcal{I}^+$}}
\psfrag{rtoinfty}{\small{$r \rightarrow \infty$}}
\psfrag{B}{\small{$\mathcal{B}$}}
\resizebox{3.0in}{!}{\includegraphics{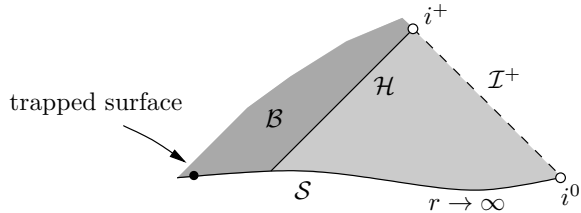}}
\hspace*{.5in}
}
\caption{\emph{A spacetime generated by ``physically reasonable asymptotically flat black hole initial data" as described in Sections \ref{AFevol} and \ref{BHdata}.
The existence of the black hole $\mathcal{B} = \mathcal{Q}\setminus J^-(\mathcal{I}^+)$ is guaranteed by \cite{D2}, given that $\mathcal{S}$ contains a trapped surface and that $\mathcal{I}^+ \neq \emptyset$; $\mathcal{H}$ is its event horizon.
}}
\label{setup1}
\end{center}
\end{figure}

%
%
%

\subsection{Regions $\mathcal{R}$, $\mathcal{T}$, and $\mathcal{A}$}
  
The characterization of a closed spacelike two-surface $S$ in $\mathcal{M}$ as trapped, marginally trapped, weakly anti-trapped, etc., depends on the signs of the inner and outer future null expansions, $\theta_+$ and $\theta_-$, on $S$.
These expansions are defined as follows: if $\ell^+$ and $\ell^-$ are two future-directed null vector fields normal to $S$, with $\ell^+$ pointing towards the asymptotically flat region and $\ell^-$ away from it, then at a point $p \in S$, 
\begin{equation*} \theta_\pm = \text{div}_S \ell^\pm = h^{\alpha \beta} \ell^\pm_{\beta; \alpha} \end{equation*}
where $h_{\alpha\beta}$ is the Riemannian metric on $S$ induced from $g$.
Rescaling $\ell^+$ or $\ell^-$ by a positive factor rescales the corresponding expansion but does not change its sign, so the signs of $\theta_{\pm}$ are well-defined given $\mathcal{M}$'s time-orientation and asymptotically flat end.  We then say that $S$ is \emph{trapped} if both $\theta_+ <0$ and $\theta_- < 0$ at all points in $S$, \emph{marginally trapped} if $\theta_+ = 0$ and $\theta_- < 0$ at all points in $S$, and \emph{weakly anti-trapped} if $\theta_- \geq 0$ at all points in $S$ (no restriction on $\theta_+$).  By contrast, closed spacelike two-surfaces in flat or nearly flat space will always have $\theta_+ > 0$ and $\theta_- < 0$ everywhere on $S$. 

Because of the spherical symmetry, $\theta_+$ and $\theta_-$ are constant on round two-spheres in $\mathcal{M}$ and therefore descend to pointwise functions on $\mathcal{Q}$.  Since $u$ is the ingoing and $v$ the outgoing coordinate on $\mathcal{Q}$, one computes that $\theta_+$ is proportional to $\del_v r$ and $\theta_-$ is proportional to $\del_u r$.  Thus the first two of the ``physically reasonable black hole initial data" assumptions listed above are satisfied provided that $\del_u r < 0$ everywhere on $\mathcal{S}$ and that there exists $q \in \mathcal{S}$ such that $\del_v r (q) < 0$ as well.  We note also that since $r$ tends to infinity in the direction of increasing $v$, $\del_v r$ must eventually become positive along $\mathcal{S}$ in the direction of the flat end, and there must therefore exist an outermost marginally trapped surface $p_\ast$ along $\mathcal{S}$.  (Here and henceforth all trapped or marginally trapped surfaces should be understood to be spherically symmetric, whether or not this is made explicit.)  This surface $p_\ast$ plays a large role in the statement and proof of the main theorem.

Now, the object of this paper is to locate and elucidate the causal and asymptotic behavior of any (spherically symmetric) \emph{marginally trapped tubes} in the spacetime.  In general, a marginally trapped tube in the spacetime $\mathcal{M}$ is a smooth hypersurface, of any causal character, which is foliated by closed, marginally trapped two-surfaces.  In our spherically symmetric setting, we restrict our attention to those marginally trapped tubes foliated by round two-spheres, in which case the tubes descend to curves in the quotient $\mathcal{Q}$. 
In fact, we define three subsets of interest in $\mathcal{Q}$: 
the \textit{regular region} 
\[ \mathcal{R} = \{ (u,v) \in \mathcal{Q} : \del_v r > 0 \text{ and } \del_u r < 0 \}, \] the \textit{trapped region} 
\[ \mathcal{T} = \{ (u,v) \in \mathcal{Q} : \del_v r < 0 \text{ and } \del_u r < 0 \}, \] and the \textit{marginally trapped tube}, 
\[ \mathcal{A} = \{ (u,v) \in \mathcal{Q} : \del_v r = 0 \text{ and } \del_u r < 0 \}. \] 
Note that $\mathcal{A}$ is a smooth hypersurface in $\mathcal{Q}$ wherever $0$ is a regular value of $\del_v r$.

It is clear that $\mathcal{Q} = \mathcal{R} \cup \mathcal{A} \cup \mathcal{T}$ if and only if $\mathcal{Q}$ contains no weakly anti-trapped surfaces.  We have assumed that $\mathcal{S}$ contains no such surfaces, and it turns out that this is sufficient to guarantee that none evolve:
\begin{prop}\cite{C,D2} \label{1} If $\del_u r < 0$ along $\mathcal{S}$, then $\del_u r < 0$ everywhere in $\mathcal{Q}$.
\end{prop}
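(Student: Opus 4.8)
The plan is to propagate the sign of $\del_u r$ off the initial slice by integrating the ingoing Raychaudhuri equation \eqref{eq1}, whose right-hand side has a definite sign. First I would observe that since $r$ is an area-radius and hence $r \geq 0$, and since $\Omega^2 > 0$ on $\mathcal{Q}$, the right-hand side of \eqref{eq1} satisfies $-r\Omega^{-2}(\del_u\phi)^2 \leq 0$ everywhere. Consequently the quantity $\Omega^{-2}\del_u r$ is non-increasing in $u$ along each ingoing null ray, i.e.\ along each curve of constant $v$ traversed toward the future in the direction of increasing $u$.

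Next, fix an arbitrary point $p = (u_p, v_p) \in \mathcal{Q}$. Because $\mathcal{Q} = D^+(\mathcal{S})$, the past-directed ingoing null ray through $p$, namely the segment $\{(u,v_p) : u \leq u_p\} \cap \mathcal{Q}$, is a past-inextendible causal curve in $\mathcal{Q}$ and must therefore meet $\mathcal{S}$, say at a point $q = (u_q, v_p)$ with $u_q \leq u_p$; moreover the whole segment joining $q$ to $p$ lies in $J^+(\mathcal{S}) \cap J^-(p) \subset \mathcal{Q}$, so \eqref{eq1} may be integrated along it. By the hypothesis $\del_u r(q) < 0$, and since $\Omega^{-2} > 0$ this gives $\Omega^{-2}\del_u r(q) < 0$. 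Integrating \eqref{eq1} in $u$ from $q$ to $p$ and using the monotonicity from the first step yields
\[ \Omega^{-2}\del_u r(p) = \Omega^{-2}\del_u r(q) + \int_{u_q}^{u_p} \left( -r\,\Omega^{-2}(\del_u\phi)^2 \right) du \leq \Omega^{-2}\del_u r(q) < 0, \]
whence $\del_u r(p) = \Omega^2 \cdot \bigl(\Omega^{-2}\del_u r(p)\bigr) < 0$. As $p \in \mathcal{Q}$ was arbitrary, this establishes $\del_u r < 0$ throughout $\mathcal{Q}$. Note that the strictness is preserved: the boundary value $\Omega^{-2}\del_u r(q)$ is strictly negative, and the integral only decreases it further.

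I expect the one genuinely nontrivial point to be the causal-theoretic step rather than the analysis: verifying that the ingoing null ray through $p$ actually reaches $\mathcal{S}$ while remaining inside $\mathcal{Q}$, instead of escaping through the possibly-present inner boundary or running off to infinity. This is precisely where the hypothesis $\mathcal{Q} = D^+(\mathcal{S})$ does the essential work, since it guarantees that every past-inextendible causal curve issuing from $p$ must cross the Cauchy surface $\mathcal{S}$, and it is what ensures the relevant null segment is contained in $\mathcal{Q}$ so that \eqref{eq1} is valid along the full range of integration. Everything else is the standard observation that Raychaudhuri's equation forces monotonicity of $\Omega^{-2}\del_u r$ in the favorable direction.
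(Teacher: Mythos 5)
Your proposal is correct and follows essentially the same route as the paper: integrate the ingoing Raychaudhuri equation \eqref{eq1} along the constant-$v$ null ray back to $\mathcal{S}$, note the integrand has a fixed sign, and conclude that $\Omega^{-2}\del_u r$ stays strictly negative. Your extra remark about why the ray must meet $\mathcal{S}$ inside $\mathcal{Q}$ is a point the paper leaves implicit, but the argument is the same.
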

\begin{proof}
Let $(u, v)$ be any point in $\mathcal{Q}$.  Suppose the ingoing null ray to the past of $(u,v)$ intersects $\mathcal{S}$ at the point $(u^\prime, v)$.  Then integrating Raychaudhuri's equation (\ref{eq1}) along this ray, we obtain
\[ (\Omega^{-2} \del_u r)(u, v) 
= (\Omega^{-2} \del_u r)(u^\prime, v) -  \int_{u^\prime}^{u} r \,\Omega^{-2} (\del_u \phi)^2 (\bar{u}, v) \, d\bar{u}. \]
By assumption $\del_u r (u^\prime, v) < 0 $, so the right-hand side of this equation is strictly negative, and hence so is the left-hand side.
\end{proof}

Integrating the other Raychaudhuri equation (\ref{eq2}) yields a slightly different but equally useful result:
\begin{prop}\cite{C,D2} \label{handy}  If $(u,v) \in \mathcal{T} \cup \mathcal{A}$, then $(u, v^\prime) \in \mathcal{T} \cup \mathcal{A}$  for all $v^\prime > v$.  Similarly, if $(u,v) \in \mathcal{T}$, then $(u, v^\prime) \in \mathcal{T}$  for all $v^\prime > v$.
\end{prop}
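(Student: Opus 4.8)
The plan is to mirror the proof of Proposition~\ref{1}, but to integrate the \emph{outgoing} Raychaudhuri equation \eqref{eq2} along an outgoing null ray (fixed $u$, increasing $v$) in place of the ingoing equation along an ingoing ray. Both $\mathcal{T}$ and $\mathcal{A}$ are cut out by the condition $\del_u r < 0$ together with a sign condition on $\del_v r$, and Proposition~\ref{1} already guarantees that $\del_u r < 0$ holds throughout $\mathcal{Q}$. Hence the constraint $\del_u r < 0$ is automatic at every point of the ray, and it suffices to track the sign of $\del_v r$ as $v$ increases at fixed $u$.

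The crux is that the right-hand side of \eqref{eq2}, namely $-r\,\Omega^{-2}(\del_v\phi)^2$, is nonpositive, since $r \geq 0$ and $\Omega^{-2} > 0$. Thus $\Omega^{-2}\del_v r$ is nonincreasing along each outgoing null ray: integrating \eqref{eq2} from $v$ to $v'$ at fixed $u$ yields
\[ (\Omega^{-2}\del_v r)(u,v') = (\Omega^{-2}\del_v r)(u,v) - \int_v^{v'} r\,\Omega^{-2}(\del_v\phi)^2(u,\bar{v})\,d\bar{v} \leq (\Omega^{-2}\del_v r)(u,v). \]
Because $\Omega^{-2} > 0$ the sign of $\del_v r$ agrees with that of $\Omega^{-2}\del_v r$, so this inequality says precisely that $\del_v r$ cannot pass from a nonpositive to a positive value as $v$ increases. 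Both claims follow at once. If $(u,v) \in \mathcal{T} \cup \mathcal{A}$, then $\del_v r(u,v) \leq 0$, forcing $\del_v r(u,v') \leq 0$; with $\del_u r(u,v') < 0$ this gives $(u,v') \in \mathcal{T} \cup \mathcal{A}$. If instead $(u,v) \in \mathcal{T}$, then $\del_v r(u,v) < 0$, so the displayed inequality yields the strict bound $(\Omega^{-2}\del_v r)(u,v') < 0$, whence $\del_v r(u,v') < 0$ and $(u,v') \in \mathcal{T}$.

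I do not expect a genuine obstacle: once the source term in \eqref{eq2} is seen to have a definite sign, the argument is a single integration. The only point deserving a word of care is that the integration must be performed along the outgoing null segment joining $(u,v)$ to $(u,v')$, so one should verify that this segment lies in $\mathcal{Q}$ whenever its endpoint does. This follows from $\mathcal{Q} = D^+(\mathcal{S})$ being a future domain of dependence: any intermediate point $(u,\bar{v})$ lies in $J^+(\mathcal{S}) \cap J^-\big((u,v')\big)$, and such points belong to $D^+(\mathcal{S})$, so the whole ray remains in $\mathcal{Q}$ and the integral is legitimate.
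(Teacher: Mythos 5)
Your proposal is correct and follows essentially the same route as the paper: integrate the outgoing Raychaudhuri equation \eqref{eq2} along the outgoing null ray at fixed $u$, note the sign-definiteness of the source term, and read off that $\del_v r$ cannot increase through zero (with strictness preserved when it starts negative). Your extra remarks --- invoking Proposition \ref{1} to handle the $\del_u r < 0$ part of the membership conditions and checking that the null segment lies in $\mathcal{Q} = D^+(\mathcal{S})$ --- are slightly more explicit than the paper's proof but do not change the argument.
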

\begin{proof} Integrating (\ref{eq2}) along the null ray to the future of a point $(u,v) \in \mathcal{Q}$ yields
\[ (\Omega^{-2} \del_v r)(u, v^\prime) = (\Omega^{-2} \del_v r)(u, v) - \int_{v}^{v^\prime} r \,\Omega^{-2} (\del_v \phi)^2 (u, \bar{v}) \,d\bar{v} \]
for $v^\prime > v$.  The right-hand side of this equation is nonpositive if $\del_v r(u, v) \leq 0$, and strictly negative if
$\del_v r(u, v) < 0$; both statements of the proposition follow. 
\end{proof}

Since $\mathcal{I}^+$ is characterized by the property that $r$ has infinite supremum along any outgoing null ray with a limit point on $\mathcal{I}^+$, an immediate consequence of Proposition \ref{handy} is that all trapped and marginally trapped surfaces must lie inside the black hole, $\mathcal{T} \cup \mathcal{A} \subset \mathcal{B}$.

We shall be concerned only with the connected component of $\mathcal{R}$ containing the exterior of the black hole $J^-(\mathcal{I}^+)$, so let us assume for convenience that $\mathcal{R} \cap \mathcal{S}$ is connected, i.e.\ that points on $\mathcal{S}$ interior to $p_\ast$ are trapped or marginally trapped.  
Indeed, since we made no assumptions concerning the inner boundary of $\mathcal{S}$, we can cut off any other components of $\mathcal{S} \cap \mathcal{R}$ without affecting our hypotheses.  
Proposition \ref{handy} now guarantees that $\mathcal{R}$ is connected in $\mathcal{Q}$.  
Figure \ref{setup2} provides a Penrose diagram indicating a possible configuration of $\mathcal{R}$, $\mathcal{T}$, and $\mathcal{A}$ in $\mathcal{Q}$.

\begin{figure}[hbtp]
\begin{center} 
{
\psfrag{A}{\scriptsize{$\mathcal{A}$}}
\psfrag{H+}{\scriptsize{$\mathcal{H}$}}
\psfrag{S}{\scriptsize{$\mathcal{S}$}}
\psfrag{T}{\scriptsize{$\mathcal{T}$}}
\psfrag{R}{\scriptsize{$\mathcal{R}$}}
\psfrag{i0}{\scriptsize{$i^0$}}
\psfrag{i+}{\scriptsize{$i^+$}}
\psfrag{scri}{\scriptsize{$\mathcal{I}^+$}}
\psfrag{p*}{\scriptsize{$p_\ast$}}
\resizebox{2.75in}{!}{\includegraphics{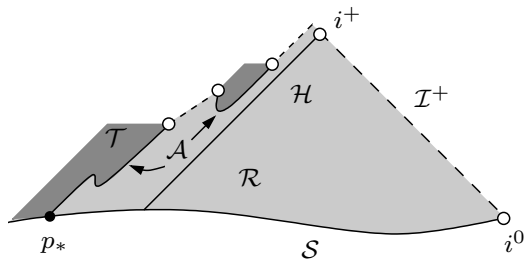}}
}
\caption{\emph{The regions $\mathcal{R}$ and $\mathcal{T}$, shaded light and dark, respectively, are separated by marginally trapped tube $\mathcal{A}$. 
The point $p_\ast$ represents the outermost marginally trapped sphere along $\mathcal{S}$; the dashed lines inside the black hole indicate portions of the spacetime boundary, $\overline{\mathcal{Q}}\setminus \mathcal{Q}$.
As depicted here, $\mathcal{A}$ need be neither achronal nor connected, but it must lie inside the black hole and comply with Proposition \ref{handy}.}}
\label{setup2}
\end{center}
\end{figure}

%
%

With notation in place, we now give a brief overview of the proof of the main theorem.
Logically speaking, the proof comprises three main parts, though they are presented somewhat out of order in the actual proof given in Section \ref{main}.  
The first and most involved piece is a bootstrap argument whose purpose is to establish positive lower bounds for two particular quantities in the region $\overline{\mathcal{R}} \cap \{ r \leq R \}$ for a carefully chosen constant $R$.
The two quantities in question are called $\kappa$ and $\alpha$ (defined by \eqref{kappadef} and \eqref{alphadef}, respectively), and the constant $R$ is chosen in such a way that $\mathcal{R} \cap \{ r < R \}$ is an open neighborhood of $\mathcal{H}$ but $R$ is not too large.
The primary tools used in closing the bootstrap are the smallness of initial data stipulated by the theorem and the energy estimates from Section \ref{MandEE} below. 
The second part of the proof shows that the component of $\mathcal{A}$ containing $p_\ast$ must be achronal and that it must extend all the way out to $v = \infty$, i.e.\ either to $i^+$ or the Cauchy horizon.
The key ingredients here are the lower bound for $\alpha$ obtained from the bootstrap and the extension principle of \cite{D2}, which is known to hold for the Higgs field matter model. 
The last step of the proof is to show that if $\mathcal{A}$ terminates at the Cauchy horizon, leaving a gap between itself and $\mathcal{H}$, then we can derive a contradiction essentially by integrating  $\del^2_{uv} r$ twice in the region formed by this gap.
The contradiction arises directly from the lower bounds on $\alpha$ and $\kappa$ obtained from the bootstrap argument in addition to bounds on the radial function $r$.

%
%
%

\subsection{Monotonicity \& energy estimates} \label{MandEE}

From Proposition \ref{1} and our assumption of no weakly anti-trapped surfaces on $\mathcal{S}$, we have
\begin{equation} \del_u r < 0 \quad \text{ everywhere in } \mathcal{Q},  \label{duneg}
\end{equation}
and by definition,
\begin{equation} \del_v r \geq 0 \quad \text{ everywhere in } \mathcal{R} \cup \mathcal{A}. \label{dvpos}
\end{equation}
These inequalities, together with equations \eqref{eq3}, \eqref{eq4}, and the dominant energy condition $V(x) \geq 0$, imply that 
everywhere in $\mathcal{R} \cup \mathcal{A}$ we have
\begin{equation} \del_u m \leq 0  \label{dudm}
\end{equation}
and
\begin{equation} \del_v m \geq 0. \label{dvdm}
\end{equation}
Set 
\begin{equation} M := \sup_{q\in\mathcal{S}} m(q) \label{M},
\end{equation}
\begin{equation} m_1 := m(p_\ast), \label{m1} \end{equation}
and
\begin{equation} r_1 := r(p_\ast), \label{r1} \end{equation}
where $p_\ast$ corresponds to the outermost marginally trapped sphere along $\mathcal{S}$ as described in the previous section.  
Since $\mathcal{S}$ is spacelike, \eqref{duneg}, \eqref{dvpos} and \eqref{dudm}, \eqref{dvdm} imply that $r$ is increasing and $m$ is non-decreasing, respectively, along $\mathcal{S} \cap \mathcal{R}$ toward the asymptotically flat end.  Thus we have
\begin{equation} m_1 = \inf_{q\in\mathcal{S}\cap \overline{\mathcal{R}}} m(q) \label{m1inf}
\end{equation}
and
\begin{equation} r_1 = \inf_{q\in\mathcal{S}\cap \overline{\mathcal{R}}} r(q). \label{r1inf}
\end{equation}
Note also that $M < \infty$ by hypothesis (Section \ref{AFevol}), and $m_1, r_1 > 0$, since \eqref{mdef} and the fact that $\del_v r (p_\ast) = 0$ together yield $r_1 = 2 m_1$, and we assumed in Section \ref{AFevol} that $r > 0$ on $\mathcal{S}$. 

\begin{prop} \label{mbounds} For all $q \in \overline{\mathcal{R}}$, 
\begin{equation} m_1 \leq m(q) \leq M \label{mboundseq}\end{equation}  
and
\begin{equation} r_1 \leq r(q). \label{rboundseq}\end{equation}
\end{prop}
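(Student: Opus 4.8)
The plan is to exploit the null-direction monotonicity recorded in \eqref{duneg}--\eqref{dvdm}, connecting an arbitrary point $q=(u_q,v_q)\in\overline{\mathcal R}$ to the initial curve $\mathcal S$ along the two null rays through $q$ and reading off the desired inequalities from the signs of $\del_v r$, $\del_u m$, and $\del_v m$. Two of the three bounds will come essentially for free from the past-directed \emph{outgoing} (constant-$u$) ray, and the remaining one from the past-directed \emph{ingoing} (constant-$v$) ray. The only real work is checking that the relevant null segment stays inside $\overline{\mathcal R}=\mathcal R\cup\mathcal A$, since the monotonicity statements hold only there.

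First I would dispatch the two lower bounds $r(q)\ge r_1$ and $m(q)\ge m_1$ simultaneously, using the outgoing ray, i.e.\ the segment $\{(u_q,v):v\le v_q\}$ running from $q$ down to its intersection $p_0=(u_q,v_0)$ with $\mathcal S$. By Proposition \ref{handy} this segment cannot meet $\mathcal T$: if some $(u_q,v)$ with $v<v_q$ were trapped, then $(u_q,v_q)=q$ would be trapped as well, contradicting $q\in\overline{\mathcal R}$. Hence the whole segment lies in $\overline{\mathcal R}$ and $p_0\in\mathcal S\cap\overline{\mathcal R}$. Along it \eqref{dvpos} gives $\del_v r\ge 0$ and \eqref{dvdm} gives $\del_v m\ge 0$, so both $r$ and $m$ are nondecreasing in $v$; integrating from $v_0$ up to $v_q$ yields $r(q)\ge r(p_0)$ and $m(q)\ge m(p_0)$. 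Since $p_0\in\mathcal S\cap\overline{\mathcal R}$, the characterizations \eqref{r1inf} and \eqref{m1inf} of $r_1$ and $m_1$ as infima over $\mathcal S\cap\overline{\mathcal R}$ give $r(p_0)\ge r_1$ and $m(p_0)\ge m_1$, as required.

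For the upper bound $m(q)\le M$ I would instead run the past-directed ingoing ray $\{(u,v_q):u\le u_q\}$ from $q$ to its intersection $p_1=(u_1,v_q)$ with $\mathcal S$. Provided this segment lies in $\overline{\mathcal R}$, \eqref{dudm} gives $\del_u m\le 0$, so $m$ is nonincreasing in $u$; integrating from $u_1$ up to $u_q$ then yields $m(q)\le m(p_1)\le\sup_{q'\in\mathcal S}m(q')=M$ by \eqref{M}, with no need to know which part of $\mathcal S$ the endpoint $p_1$ lies in.

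The hard part is precisely this containment claim for the ingoing segment. Unlike the outgoing case, Proposition \ref{handy} constrains only the $v$-direction, and since $\del_v r$ need not be monotone in $u$, an ingoing ray issuing from $\overline{\mathcal R}$ could a priori cross into the trapped region $\mathcal T$ before reaching $\mathcal S$; this is exactly the behavior that permits $\mathcal A$ to be neither achronal nor connected, as in Figure \ref{setup2}, and on any such excursion the sign of $\del_u m$ in \eqref{dudm} is lost. I expect to close this using the defining property of $\mathcal R$ as the connected component containing the exterior $J^-(\mathcal I^+)$ together with Proposition \ref{handy}: the goal is to show that no past-directed ingoing ray from a point of $\overline{\mathcal R}$ can enter $\mathcal T$, i.e.\ that $\mathcal R$ behaves as a past set along ingoing rays, so that the segment from $q$ down to $p_1$ remains in $\overline{\mathcal R}$ and \eqref{dudm} applies throughout. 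Establishing this regularity of the past is the crux on which the upper bound rests.
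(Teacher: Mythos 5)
Your treatment of the two lower bounds is correct and is exactly the paper's argument: Proposition \ref{handy} keeps the past-directed outgoing ray in $\overline{\mathcal{R}}$, and \eqref{dvpos}, \eqref{dvdm} together with \eqref{r1inf}, \eqref{m1inf} finish it. The gap is in the upper bound $m(q)\leq M$, and it is exactly at the point you flag as the crux. The containment you hope to establish --- that the past-directed ingoing ray from a point of $\overline{\mathcal{R}}$ never enters $\mathcal{T}$ --- is false in general, and neither Proposition \ref{handy} (which constrains only the $v$-direction) nor the connectedness of $\mathcal{R}$ rules it out. Figure \ref{setup2} depicts precisely this situation: $\mathcal{A}$ need not be achronal or connected at this stage, so an ingoing ray can pass from $\mathcal{R}$ into $\mathcal{T}$ and back into $\mathcal{R}$ before reaching $\mathcal{S}$. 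Excluding such behavior is essentially the content of the main theorem and requires the smallness hypotheses; Proposition \ref{mbounds} must hold without them. So the plan as stated cannot be closed.

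The paper's proof accepts that the ingoing ray $[u'',u']\times\{v'\}$ may contain a trapped excursion and bridges it instead of excluding it. Let $u_1$ and $u_2$ be the minimum and maximum of $\{u\in[u'',u'] : (u,v')\in\mathcal{T}\cup\mathcal{A}\}$. By continuity of $\del_v r$ both $(u_1,v')$ and $(u_2,v')$ lie in $\mathcal{A}$, where \eqref{mdef} and $\del_v r=0$ give $m=\frac{r}{2}$. The subsegments $[u'',u_1]$ and $[u_2,u']$ lie in $\overline{\mathcal{R}}$, so \eqref{dudm} applies there, while across the possibly trapped middle portion one uses only $\del_u r<0$, which holds everywhere in $\mathcal{Q}$ by Proposition \ref{1}. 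This yields
\[ m(q)\ \leq\ m(u_2,v')\ =\ \tfrac{r}{2}(u_2,v')\ \leq\ \tfrac{r}{2}(u_1,v')\ =\ m(u_1,v')\ \leq\ m(u'',v')\ \leq\ M. \]
The key idea you are missing is this conversion of the mass bound into a radius bound on $\mathcal{A}$, which lets the global monotonicity of $r$ in $u$ carry the estimate through the region where the monotonicity of $m$ is lost.
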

\begin{proof} Given $q \in \overline{\mathcal{R}}$, the outgoing null ray to the past of $q$  must also lie in $\overline{\mathcal{R}}$ by Proposition \ref{handy}.  Then \eqref{rboundseq} and the left-hand inequality of \eqref{mboundseq} follow immediately from \eqref{dvdm} and \eqref{m1inf}, or \eqref{dvpos} and \eqref{r1inf}, respectively.   

The ingoing null ray to the past of a point $q \in \overline{\mathcal{R}}$ may or may not lie entirely in $\overline{\mathcal{R}}$.  If it does, then the right-hand inequality of \eqref{mboundseq} is immediate from \eqref{dudm} and \eqref{M}.  If not, then suppose $q = (u^\prime, v^\prime)$, so that the ingoing null ray in question is given by $\{ (u, v) : v = v^\prime \}$, and suppose $(u^{\prime\prime}, v^\prime)$ is the point at which this ray intersects $\mathcal{S}$.
Further, let $u_1$ and $u_2$ be the minimum and maximum, respectively, of the set $\{ u \in [u^{\prime\prime}, u^\prime] : (u, v^\prime) \in \mathcal{T} \cup \mathcal{A} \}$.  It follows that $(u_1, v^\prime), (u_2, v^\prime) \in \mathcal{A}$, so  $\del_v r (u_1, v^\prime) = \del_v r (u_2, v^\prime) = 0$. Then applying inequalities \eqref{duneg}, \eqref{dudm}, and equation \eqref{mdef}, we have
\[ m(q)  \leq m(u_2, v^\prime) = \frac{r}{2}(u_2, v^\prime) \leq \frac{r}{2}(u_1, v^\prime) = m(u_1, v^\prime) \leq m(u^{\prime\prime}, v^\prime) \leq M, \]
so the right-hand inequality holds as well.
\end{proof}

Let $\kappa$ denote the quantity
\begin{equation} \kappa := -\frac{\Omega^2}{4 \del_u r} = \frac{\del_v r}{1 - \frac{2m}{r} } \label{kappadef}\end{equation}
and observe that \eqref{duneg} implies that $\kappa > 0$ everywhere in $\mathcal{Q}$.

We now have the necessary tools to derive the two energy estimates for $\phi$ in $\overline{\mathcal{R}}$ which are crucial for the main theorem.  
\begin{prop} For any interval $[ u_1, u_2 ] \times \{ v \} \subset \overline{\mathcal{R}}$,
\begin{equation} - \int_{u_1}^{u_2} \textstyle\frac{1}{2} \left(1 - \frac{2m}{r}\right)\left( \frac{r \del_u \phi}{\del_u r} \right)^2 \del_u r  (\bar{u}, v) \, d\bar{u} \, \leq \, M - m_1, \label{u-energy}\end{equation}
and for any interval $\{ u \} \times [v_1 , v_2] \subset \overline{\mathcal{R}}$,
\begin{equation} \int_{v_1}^{v_2} \textstyle\frac{1}{2} r^2 \kappa^{-1} (\del_v \phi)^2 (u,\bar{v}) d\bar{v} \leq M - m_1. \label{v-energy}
\end{equation}
\end{prop}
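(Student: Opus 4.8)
The plan is to recognize both integrands as precisely the matter-flux terms appearing in the Hawking-mass evolution equations \eqref{eq3} and \eqref{eq4}, and then to bound the total flux along each null ray by the total variation of $m$ across $\overline{\mathcal{R}}$, which Proposition \ref{mbounds} controls by $M - m_1$. The only algebraic input needed is the identity
\[ 1 - \frac{2m}{r} = -4\Omega^{-2}\,\del_u r\,\del_v r, \]
read off directly from the definition \eqref{mdef} of the Hawking mass; combined with \eqref{kappadef} it also gives $\kappa^{-1} = -4\Omega^{-2}\del_u r$. I would establish this identity first, since everything else is forced by it.

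For the $v$-energy \eqref{v-energy}, substituting $\kappa^{-1} = -4\Omega^{-2}\del_u r$ into the integrand yields
\[ \textstyle\frac{1}{2} r^2 \kappa^{-1} (\del_v\phi)^2 = -2 r^2 \Omega^{-2}(\del_v\phi)^2\,\del_u r, \]
which is exactly the second term on the right-hand side of \eqref{eq4}. Hence this quantity equals $\del_v m - r^2 V(\phi)\,\del_v r$. In $\overline{\mathcal{R}}$ we have $\del_v r \geq 0$ by \eqref{dvpos} and $V \geq 0$ by the dominant energy condition, so the subtracted term is nonnegative and the integrand is bounded above by $\del_v m$. Integrating along the constant-$u$ ray and using $m_1 \leq m \leq M$ on $\overline{\mathcal{R}}$ from Proposition \ref{mbounds} then gives $\int_{v_1}^{v_2} \frac12 r^2 \kappa^{-1}(\del_v\phi)^2\,d\bar v \leq m(u,v_2) - m(u,v_1) \leq M - m_1$.

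The $u$-energy \eqref{u-energy} is symmetric. Using the identity for $1 - \frac{2m}{r}$ the integrand collapses to
\[ \textstyle\frac{1}{2}\left(1 - \frac{2m}{r}\right)\left(\frac{r\del_u\phi}{\del_u r}\right)^2 \del_u r = -2 r^2 \Omega^{-2}(\del_u\phi)^2\,\del_v r, \]
the flux term in \eqref{eq3}, so it equals $\del_u m - r^2 V(\phi)\,\del_u r$. Since $\del_u r < 0$ everywhere by \eqref{duneg} and $V \geq 0$, we have $-r^2 V(\phi)\,\del_u r \geq 0$; after negation, the integrand of \eqref{u-energy} is bounded above by $-\del_u m$. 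Integrating over $[u_1,u_2]$ and invoking Proposition \ref{mbounds} once more produces $m(u_1,v) - m(u_2,v) \leq M - m_1$, as claimed.

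These computations are short and essentially automatic once the Hawking-mass identity is in hand; there is no substantial obstacle. The only points requiring genuine care are bookkeeping: checking that the relevant endpoints lie in the \emph{closed} region $\overline{\mathcal{R}}$ so that Proposition \ref{mbounds} applies there, and confirming that discarding the nonnegative $V(\phi)$-term goes in the mass-\emph{decreasing} direction for the $v$-estimate but in the correct negated direction for the $u$-estimate. I expect this sign-tracking against the orientation conventions \eqref{duneg}, \eqref{dvpos} to be the only mild subtlety in an otherwise routine argument.
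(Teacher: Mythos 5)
Your proof is correct and takes essentially the same route as the paper: both integrate the Hawking-mass equations \eqref{eq3}, \eqref{eq4} along the null segment, discard the nonnegative $V(\phi)$-flux using the sign conditions \eqref{duneg}, \eqref{dvpos}, and bound the resulting mass difference by $M-m_1$ via Proposition \ref{mbounds}. The identity $1-\frac{2m}{r}=-4\Omega^{-2}\del_u r\,\del_v r$ you isolate is exactly the ``rearranging via \eqref{kappadef}'' step the paper performs implicitly.
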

\begin{proof}
To obtain the first of these estimates, we integrate equation \eqref{eq3} along  $[ u_1, u_2 ] \times \{ v \}$ and apply \eqref{duneg}, the dominant energy condition, and Proposition \ref{mbounds} to get
\begin{eqnarray*} m_1 - M & \leq & 
\int_{u_1}^{u_2}   r^2 \left( V(\phi) \del_u r - 2 \Omega^{-2} (\del_u\phi)^2 \del_v r \right)  (\bar{u}, v) d\bar{u} \\
& \leq & 
\int_{u_1}^{u_2} - \left( 2 r^2 \Omega^{-2} (\del_u\phi)^2 \del_v r \right) (\bar{u}, v) d\bar{u}.
\end{eqnarray*}
Using \eqref{kappadef} and rearranging the factors of the integrand, this yields \eqref{u-energy}.

Analogously, for the second estimate, we integrate equation \eqref{eq4} along $\{ u \} \times [v_1 , v_2]$  and apply \eqref{dvpos}, the dominant energy condition, and Proposition \ref{mbounds}, arriving at 
\begin{eqnarray*}
M - m_1 & \geq & \int_{v_1}^{v_2} r^2 \left( V(\phi) \del_v r - 2 \Omega^{-2} (\del_v\phi)^2 \del_u r \right) (u, \bar{v}) d\bar{v} \\
& \geq & \int_{v_1}^{v_2} - \left( 2 r^2\Omega^{-2} (\del_v\phi)^2 \del_u r \right) (u, \bar{v}) d\bar{v}.
\end{eqnarray*}
Again, making use of \eqref{kappadef} and rearranging  yields \eqref{v-energy}.
\end{proof}

%
%

\section{Main Result} \label{main}

\begin{theorem} 
Consider spacelike spherically symmetric initial data for the Einstein-Higgs field equations with one asymptotically flat end.  
Assume that the Higgs potential function is such that the dominant energy condition is satisfied, i.e.\ 
\[ V(x) \geq 0 \,\,\text{ for all } x\in \mathbb{R}, \] 
and assume further that 
\[ |V^{\prime\prime}(x)| \leq B < \infty \,\,\text{ for all } x\in \mathbb{R}. \] 
As described in Section \ref{prelim}, suppose $(\mathcal{Q}, \Omega, r, \phi)$ is the 2-dimensional Lorentzian quotient of the future Cauchy development of the initial hypersurface $\mathcal{S}$ equipped with asymptotically flat physically reasonable black hole initial data, with
$\phi \rightarrow \phi_+$ along $\mathcal{S}$ for some $\phi_+ \in \mathbb{R}$, and with global coordinates $(u, v)$ obtained from identifying $\mathcal{Q}$ with its image under a conformal embedding into $\mathbb{M}^{1+1}$.
Let $p_\ast \in \mathcal{Q}$ be the point corresponding to the outermost marginally trapped sphere along $\mathcal{S}$, and let $\mathcal{H}$ denote the event horizon of the black hole region in $\mathcal{Q}$.

\smallskip

Define $m_1$, $M$, and $\kappa$ as in \eqref{m1}, \eqref{M} and \eqref{kappadef}, respectively, and assume that there exists a constant $\kappa_0$ such that
\begin{equation} 0 < \kappa_0 \leq \kappa(q) \leq 1 \label{kappa0def} \end{equation}
for all $q\in \mathcal{S}$.  
Fix a constant $C$ such that
\begin{equation} C > \sup \left\{ \left|  \frac{r \del_u \phi}{\del_u r} (q) \right| \, : \, q \in \mathcal{S},  r(q) \leq 2M \right\}. 
\label{Cdef}
\end{equation}
Then there exist constants $\epsconst > 0$ and $\rhoconst \in (0,1)$, depending only on $M$, $B$, $\kappa_0$, and $C$, such that if
\begin{equation} |\phi - \phi_+| \leq \frac{\epsconst}{\sqrt{r}} \quad \text{ on } \mathcal{S} \label{phicond}\end{equation}
and
\begin{equation} \rhoconst \leq \frac{m_1}{M},  \label{mcond} \end{equation} 
then $\mathcal{Q}$ contains a connected, achronal, marginally trapped tube $\mathcal{A}$ which intersects $\mathcal{S}$ at $p_\ast$ and is asymptotic to the event horizon $\mathcal{H}$.  
See Figure \ref{thm} for a representative Penrose diagram.
\end{theorem}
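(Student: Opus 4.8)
The plan is to work entirely on the quotient $\mathcal{Q}$ and to combine three ingredients: a derived wave equation for $r$, the energy estimates \eqref{u-energy}--\eqref{v-energy}, and the two smallness hypotheses \eqref{phicond} and \eqref{mcond}. First I would record an evolution equation for $\partial^2_{uv} r$: computing $\partial_u m$ from \eqref{mdef}, eliminating $\partial_u(\Omega^{-2}\partial_u r)$ via \eqref{eq1}, and comparing with \eqref{eq3} gives
\[ \partial^2_{uv} r = \frac{\Omega^2}{2}\left( r V(\phi) - \frac{m}{r^2}\right), \]
while \eqref{eq2} gives $\partial^2_{vv} r \le 0$ wherever $\partial_v r = 0$. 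Since $V(\phi_+)=V'(\phi_+)=0$ and $|V''|\le B$, Taylor's theorem yields $0\le V(\phi)\le \tfrac{B}{2}(\phi-\phi_+)^2$ and $|V'(\phi)|\le B|\phi-\phi_+|$, so every potential term is controlled once $\phi-\phi_+$ is small. Finally I would fix $R$ slightly larger than $r_+$; since one checks $r_+\le 2M$ (on $\mathcal{H}$ one has $\partial_v r\to 0$ and $m\le M$, forcing $r\to 2m_+$), this $R$ is comparable to $M$, and by Proposition \ref{handy} together with $r=2m\le 2M$ on $\mathcal{A}$, the set $\mathcal{R}\cap\{r<R\}$ is a neighborhood of $\mathcal{H}$ containing the entire component of $\mathcal{A}$ through $p_\ast$.

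The technical heart, and the step I expect to be the main obstacle, is a continuity/bootstrap argument on $\overline{\mathcal{R}}\cap\{r\le R\}$. I would carry the pointwise bootstrap assumptions $|\phi-\phi_+|\le A_1\epsconst/\sqrt{r}$ and $|r\,\partial_u\phi/\partial_u r|\le A_2 C$ on the set where they hold, and recover them with improved constants. To improve the field bound I would integrate $\partial_u\phi$ and $\partial_v\phi$ from $\mathcal{S}$ along characteristics and apply Cauchy--Schwarz against \eqref{u-energy} and \eqref{v-energy}; crucially, both energies are bounded by $M-m_1\le M(1-\rhoconst)$, which \eqref{mcond} makes small, \eqref{phicond} supplies the small data on $\mathcal{S}$, and the weight $r^{-1/2}$ is exactly what an $L^2$ energy and the linear growth of $r$ produce. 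To control the renormalized derivative I would use the transport equation $\partial^2_{uv}(r\phi)=\phi\,\partial^2_{uv}r-\tfrac14\Omega^2 r\,V'(\phi)$, obtained from \eqref{sfequv} and the $r$-equation above, integrated along ingoing rays. The lower bound $\kappa\ge\kappa_0'>0$ then follows by integrating the identity $\partial_u\log\kappa = r(\partial_u\phi)^2/\partial_u r\le 0$ from \eqref{eq1}, the loss being at most $\tfrac{1}{r_1}(A_2 C)^2\int(-\partial_u r)\le \tfrac{(A_2C)^2 R}{r_1}$ by the bootstrap, which produces a fixed $\kappa_0'$ depending only on $M,B,\kappa_0,C$. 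The delicate point is that $\overline{\mathcal{R}}$ abuts $\mathcal{A}$, where $1-\tfrac{2m}{r}\to 0$ degenerates the weight in \eqref{u-energy}, so $R$ and the parameters $\epsconst,\rhoconst$ must be tuned together to keep every estimate closed all the way up to $r=R$.

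With the bootstrap closed, the field is so small on $\overline{\mathcal{R}}\cap\{r\le R\}$ that, using $m\ge m_1\ge\rhoconst M$ and $r\le R$, the quantity $\alpha := \tfrac{m}{r^2}-rV(\phi)$ satisfies a fixed lower bound $\alpha\ge\alpha_0>0$ there: the term $m/r^2\ge \rhoconst M/R^2$ dominates $rV\le \tfrac{B}{2}A_1^2\epsconst^2$ once $\epsconst$ is small. Since $\partial^2_{uv} r=-\tfrac12\Omega^2\alpha<0$ on $\mathcal{A}\cap\{r\le R\}$ and $\partial^2_{vv}r\le 0$ there, a tangent $T$ to $\mathcal{A}$ satisfies $\overline{g}(T,T)\propto \Omega^2\,\partial^2_{uv}r\,\partial^2_{vv}r\ge 0$, so the component $\mathcal{A}_\ast$ of $\mathcal{A}$ through $p_\ast$ is spacelike or null, i.e.\ achronal; moreover $\partial^2_{uv}r\neq 0$ shows $0$ is a regular value of $\partial_v r$, so $\mathcal{A}_\ast$ is a smooth connected curve issuing from $p_\ast$ with no interior endpoint. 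Because $r=2m\ge 2m_1>0$ is bounded away from zero along $\mathcal{A}_\ast$, the extension principle of \cite{D2} (valid for the Higgs model) forbids a first singularity on $\mathcal{A}_\ast$, so it must persist until $v=\infty$, terminating at $i^+$ or at the Cauchy horizon.

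It remains to show $\mathcal{A}_\ast$ is asymptotic to $\mathcal{H}$ rather than leaving a gap. Suppose instead that $\mathcal{A}_\ast$ limits to the Cauchy horizon strictly inside $\mathcal{H}$, so a strip of $\mathcal{R}$ of definite width separates $\mathcal{A}_\ast$ from $\mathcal{H}$ near $v=\infty$. Writing $\tfrac12\Omega^2=2\kappa(-\partial_u r)$ and using $\kappa\ge\kappa_0'$, $\alpha\ge\alpha_0$, I would integrate $\partial_u(\partial_v r)=\partial^2_{uv}r\le -2\kappa_0'\alpha_0(-\partial_u r)$ in $u$ from $\mathcal{A}_\ast$ (where $\partial_v r=0$) across the strip to obtain $\partial_v r(u,v)\ge 2\kappa_0'\alpha_0\,(r(u,v)-r_{\mathcal{A}}(v))$, where $r_{\mathcal{A}}(v)$ is the radius on $\mathcal{A}_\ast$. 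Integrating this lower bound for $\partial_v r$ a second time in $v$ along an outgoing ray in the strip, and using that $r\le r_+\le 2M$ there while the definite width keeps $r(u,v)-r_{\mathcal{A}}(v)$ bounded below, forces $r\to\infty$ as $v\to\infty$, contradicting Proposition \ref{mbounds}. Hence no gap can persist, $r_{\mathcal{A}}(v)\to r_+$, and $\mathcal{A}_\ast$ approaches $\mathcal{H}$, which is the asserted asymptotic statement. This last step is precisely where the lower bounds on both $\kappa$ and $\alpha$ and the uniform bound on $r$ are used simultaneously.
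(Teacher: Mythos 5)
Your overall strategy --- a bootstrap on $|\phi-\phi_+|$, $|r\del_u\phi/\del_u r|$, $\kappa$, and $\alpha$ in $\overline{\mathcal{R}}\cap\{r\le R\}$, closed via the energy estimates \eqref{u-energy}--\eqref{v-energy} and the smallness of $M-m_1$; achronality from $\alpha>0$ together with the signs of $\del^2_{uv}r$ and $\del^2_{vv}r$ on $\mathcal{A}$; extension to $v=\infty$ via the extension principle of \cite{D2}; and a final contradiction from integrating $\del^2_{uv}r=2\kappa r^{-2}(\del_u r)\,\alpha$ --- is essentially the paper's. Two cautions on the setup. First, $R$ must be fixed in terms of the data alone ($\epsconst$ and $\rhoconst$ may depend only on $M,B,\kappa_0,C$), so you cannot define $R$ as ``slightly larger than $r_+$''; the paper takes $R=\lambda M$ with $\lambda>2$ and only verifies $r_+\le 2M_f\le 2M<R$ a posteriori via Lemma 3 of \cite{D2}. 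Second, the transport bound for the renormalized derivative has to be propagated in the $v$-direction (outgoing rays), where the damping coefficient $-2\kappa\alpha/r^2$ --- equivalently, the exponential growth of $-\del_u r$ in $v$ --- keeps the accumulated source finite over the infinite $v$-range; your proposal to ``integrate $\del^2_{uv}(r\phi)$ along ingoing rays'' controls $\del_v(r\phi)$, not the $u$-derivative quantity you need, and as written does not exploit the damping that makes the bound uniform up to $v=\infty$.

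The genuine gap is in the last step. Having integrated $\del^2_{uv}r$ in $u$ from $\mathcal{A}$ to obtain $\del_v r(u,v)\ge 2\kappa_0'\alpha_0\,(r(u,v)-r_{\mathcal{A}}(v))$, you assert that ``the definite width keeps $r(u,v)-r_{\mathcal{A}}(v)$ bounded below.'' A positive separation in the $u$-coordinate does not imply a positive separation in $r$: one has $r(u,v)-r_{\mathcal{A}}(v)=\int_u^{u_{\mathcal{A}}(v)}(-\del_u r)\,d\bar u$, and nothing you have established prevents $-\del_u r$ from being small there. What your inequality actually yields, after integrating in $v$ and using $r\le r_+<\infty$, is only $\liminf_{v\to\infty}\bigl(r(U,v)-r_{\mathcal{A}}(v)\bigr)=0$, hence (by monotonicity of $r$ along $\mathcal{A}$) $r_{\mathcal{A}}(v)\to r_+$ --- the geometric sense of ``asymptotic'' --- but this does not rule out a coordinate gap $u_\ast>U$. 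To exclude that, reverse the order of integration: \eqref{druv} gives $\del_v\log(-\del_u r)=2\kappa\alpha r^{-2}$, bounded below by a positive constant on a strip $[U,\overline u\,]\times[V,\infty)$ contained in the regular region, so $-\del_u r$ grows exponentially in $v$; integrating \emph{that} in $u$ across the strip forces $r(\overline u,v)\to-\infty$, the desired contradiction. This exponential growth of $-\del_u r$ is precisely the ingredient your sketch omits, and it is also what would justify a corrected version of your ``bounded below'' claim.
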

\begin{figure}[hbtp]
\begin{center}
{
\psfrag{A}{\small{$\mathcal{A}$}}
\psfrag{H+}{\small{$\mathcal{H}$}}
\psfrag{S}{\small{$\mathcal{S}$}}
\psfrag{i0}{\small{$i^0$}}
\psfrag{i+}{\small{$i^+$}}
\psfrag{scri}{\small{$\mathcal{I}^+$}}
\psfrag{p*}{\small{$p_\ast$}}
\resizebox{2.25in}{!}{\includegraphics{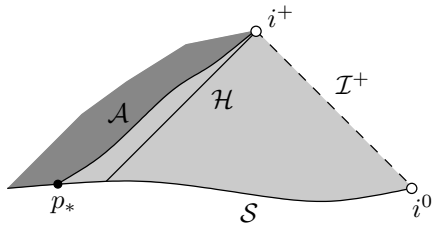}}
}
\caption{\emph{The result of the theorem is that the marginally trapped tube $\mathcal{A}$ emanating from the outermost marginally trapped sphere $p_\ast$ is in fact connected, achronal, and asymptotic to the event horizon $\mathcal{H}$, as shown here.}}
\label{thm}
\end{center}
\end{figure}
%
%
%
%
%
%
%
%
%
\noindent 
\emph{Remarks.} 
Assumption \eqref{kappa0def} amounts to fixing a gauge along $\mathcal{S}$, while \eqref{Cdef} poses no restriction on the initial data.

The constants $\epsconst$ and $\rhoconst$ should both be understood as smallness parameters.  
In particular, $\rhoconst$ should be thought of as being very close to 1, forcing $m_1$ to be close to $M$ and hence allowing very little matter flux along the asymptotically flat end.

The possibility that $\mathcal{A}$ coincides which with part or all of $\mathcal{H}$ is included in our interpretation of the term  ``asymptotic."
Indeed, there are two different, reasonable definitions of what it means for $\mathcal{A}$ to be ``asymptotic to" $\mathcal{H}$. 
We discuss these in the course of the proof and show that $\mathcal{A}$ is asymptotic to $\mathcal{H}$ in both senses.

Finally, although we do not pursue the matter in the present paper, it seems plausible that one could remove the requirement that $V$ have a uniformly bounded second derivative on all of $\mathbb{R}$. 
Instead, one would carefully track the range of $\phi$, showing that it lies in some compact interval which can be identified \emph{a priori}, depending only on the initial data, and then replace the uniform bound $B$ with the maximum of $|V^{\prime\prime}|$ over this interval.

%
%
%

\begin{proof}
To begin, note that since \eqref{eq1}-\eqref{sfequv} depend only on derivatives of $\phi$ except through the potential $V(\phi)$, by translating $\phi$ and shifting the domain of $V$, we may assume without loss of generality that $\phi_+ = 0$ and $V(0) = V^\prime(0) = 0$ (cf.\ Section \ref{AFevol}).
Such a modification has no real effect except to simplify the presentation of proof.

The proof comprises four parts. 
First we establish the necessary constants, including $\epsconst$ and $\rhoconst$ and a host of auxiliary parameters.  
Second, we set up a bootstrap region $\mathcal{V}$ in $\mathcal{R}$ in which certain nice estimates hold.  
Third, we show how the statement of the theorem follows from retrieval of the bootstrap estimates in $\overline{\mathcal{V}} \cap \mathcal{R}$.  
And finally, we show that $\mathcal{V} \neq \emptyset$ and retrieve each of the bootstrap conditions in $\overline{\mathcal{V}} \cap \mathcal{R}$.

\medskip
First fix $\epsconst$ to be any value such that
\begin{equation}  \epsconst < \frac{1}{ \sqrt{8 M B}}. 
\end{equation}
Prescribing $\rhoconst$ is less straightforward, since its dependence on $M$, $B$, $\kappa_0$ and $C$ is rather complicated. 
Our definition below does not necessarily produce the optimal (sharp) value of $\rhoconst$.

In order to determine $\rhoconst$, we first define auxiliary constants $\sigma$, $L$, $\Lambda$, $k$, and $\lambda$.  
First choose $\sigma < 1$ sufficiently close to $1$ that
\begin{equation} \frac{\epsconst}{\sigma} < \frac{1}{ \sqrt{8 M B}}, \label{sigmadef}
\end{equation}
and then choose $L > 0$ sufficiently small that
\begin{equation} \epsconst + \sqrt{2LM} < \frac{\epsconst}{\sigma}. \label{Ldef}
\end{equation}
Set
\begin{equation} \Lambda := C + \frac{9Me^L\sqrt{B}}{2 \sigma^2 \kappa_0} + \frac{3e^{L/2}}{\sigma}\sqrt{\frac{L}{2\kappa_0}}. \label{Lambdadef}
\end{equation}
Next choose $k > 1$ sufficiently close to $1$ that 
\begin{equation} \log{k} < 
\min\left\{ \textstyle\frac{1}{2} \Lambda^{-2}
\log 2, \,
\log\left(\textstyle \frac{3}{2}\right) \right\} \label{kdef}
\end{equation}
\it and \rm the following inequality is satisfied:
\begin{equation} k^3 \left( \frac{\epsconst}{\sigma \sqrt{2M}} +  \sqrt{\textstyle\frac{1}{2}\log 2 \log k } \right)^2 < \frac{1}{16 M^2 B}.
\label{kdef2}
\end{equation}
Note that such a choice is possible by \eqref{sigmadef}.
Finally, choose $\lambda$ such that
\begin{equation} 2 < \lambda < 2k  \label{lambdadef}
\end{equation}
and, recalling \eqref{Cdef}, such that
\begin{equation} \sup \left\{ \left|  \frac{r \del_u \phi}{\del_u r} (q) \right| \, : \, q \in \mathcal{S},  r(q) \leq \lambda M \right\} \leq C,
\label{lambdadef2}
\end{equation}
and note that \eqref{kdef} and \eqref{lambdadef} imply that $\lambda < 3$.
We now fix $\rhoconst$ such that 
\begin{equation} \max\left\{ 
\frac{\lambda}{2k}, \,
1 - L\left( 1 - \frac{2}{\lambda} \right), \,
\sigma^2 
\right\}
< \rhoconst
< 1.
\label{rhodef}
\end{equation}
Such a choice is possible because \eqref{lambdadef} and the fact that $\sigma < 1$ together imply that the maximum on the left-hand side of \eqref{rhodef} is strictly less than $1$.
It is helpful to note that \eqref{mcond} now implies not only that $m_1 \geq \rhoconst M$, but also $r_1 \geq 2\rhoconst M$, since $2m_1 = 2m(p_\ast) = r(p_\ast) =  r_1$.  
Two other useful consequences of this definition \eqref{rhodef} are that $\frac{\lambda}{2\rhoconst} < k$ and $1 - \rhoconst < L$.

\medskip

Next we define all the constants needed for the bootstrap argument.
First, for convenience, set
\begin{eqnarray}
R & := &  \lambda M \label{Rdef} \\
\ell & := & \frac{1-\rhoconst}{1-\frac{2}{\lambda}}. \label{elldef}
\end{eqnarray}
Note that \eqref{rhodef} implies that $\ell < L$.
Finally, define
\begin{eqnarray}
C_1 & := & \sqrt{\frac{\log 2}{2 \log (\frac{\lambda}{2\rhoconst}) }} \label{C1} \\
\widehat{C} & := & \frac{\epsconst}{\sigma \sqrt{2M}} + \sqrt{\textstyle\frac{1}{2}\log 2 \log k} \label{Chat} \\
\overline{C} & := & \frac{\sqrt{B}}{4M} \label{Coverline} \\
\kappa_1 & := & \textstyle\frac{1}{2}\kappa_0 \cdot e^{-2\ell/\lambda } \label{kappa1}\\
\alpha_1 & := & \textstyle\frac{1}{2} \rhoconst M. \label{alpha1}
\end{eqnarray}

\medskip
%
%

Now we have the necessary constants in place to define our bootstrap region.
Let $\mathcal{V}$ be the set of all points $(u,v)$ in $\{r \leq R \} \cap \mathcal{Q}$ such that the following six inequalities are satisfied for all $(\tilde{u}, \tilde{v}) \in J^{-}(u,v) \cap \{ r \leq R \}$: 
\begin{align}  \left| \frac{r \del_u \phi}{\del_u r} (\tilde{u}, \tilde{v} )\right|  \quad < & \quad C_1 \label{redshift}\\
|\phi(\tilde{u}, \tilde{v}))|  \quad < &  \quad \widehat{C} \label{phibound} \\
|V^\prime(\phi(\tilde{u}, \tilde{v}))|  \quad < &  \quad \overline{C} \label{potential}\\
\alpha(\tilde{u}, \tilde{v}) \quad > & \quad \alpha_1 \label{alpha}\\
\kappa(\tilde{u}, \tilde{v}) \quad > & \quad \kappa_1 \label{kappa}\\
\del_v r (\tilde{u}, \tilde{v})  \quad > & \quad 0, \label{drdv}
\end{align}
where the function $\alpha(u,v)$ is defined as
\begin{equation}\label{alphadef} \alpha := m - r^3 V(\phi).
\end{equation}

\medskip

%
%
%

The first step of the proof is to show that inequalities \eqref{redshift}-\eqref{drdv} hold along the curve $\mathcal{S}_o$, where
$\mathcal{S}_o := \mathcal{S} \cap \mathcal{R} \cap \{ r \leq R \} = \mathcal{S} \cap \{ r_1 < r \leq R \}$; then $\mathcal{V}$ contains an open neighborhood of $\mathcal{S}_o$ in $\mathcal{Q}$.  
Since $\mathcal{V}$ is a past set in $\{ r \leq R \}$, i.e.\ $J^-(\mathcal{V}) \cap \{ r \leq R \} \subset \mathcal{V}$, its future boundary $\del^+ \mathcal{V} := \overline{\mathcal{V}}\setminus\mathcal{V}$ must be achronal.  
(Recall that set closures are taken here with respect to the topology of the underlying Minkowski space, so in particular, $\del^+ \mathcal{V}$ need not  \emph{a priori} lie entirely in $\mathcal{Q}$.)
 
Next consider a point $p \in \overline{\mathcal{V}} \cap \mathcal{R}$. By definition of $\mathcal{V}$, $p$ has the property that non-strict versions of inequalities \eqref{redshift}-\eqref{drdv} hold for all $q \in J^-(p) \cap  \{r \leq R \}$.  
Using this property, we shall show below that in fact \emph{strict} inequalities \eqref{redshift}-\eqref{drdv} hold at $p$.  
Since strict inequalities \eqref{redshift}-\eqref{drdv} must then also hold in a neighborhood of $p$ and $\del^+\mathcal{V}$ is achronal, we must have $p \in \mathcal{V}$.  
That is, $\mathcal{V}$ is closed in $\mathcal{R}$, and hence in $\mathcal{R} \cap \{ r \leq R \}$ as well.  
Clearly $\mathcal{V}$ is also open in $\mathcal{R} \cap \{ r \leq R \}$.  
Thus by continuity we have $\mathcal{V} = \mathcal{R} \cap \{ r \leq R \}$, and it follows that $\del^+ \mathcal{V} \cap \mathcal{Q} \subset \mathcal{A}$.  

Setting aside for the moment the matter of retrieving the bootstrap inequalities in $\overline{\mathcal{V}} \cap \mathcal{R}$, we now show how the statement of the theorem follows from these assertions, i.e.\ that $\mathcal{V} = \mathcal{R} \cap \{ r \leq R \}$ and $\del^+ \mathcal{V} \cap \mathcal{Q} \subset \mathcal{A}$.

It was shown in \cite{Wil} that if the inequality $T_{uv} \Omega^{-2} < \frac{1}{4r^2}$ holds at every point of $\mathcal{A}$, then each connected component of $\mathcal{A}$ is a smooth curve in $\mathcal{Q}$ which is everywhere spacelike or outgoing-null.  
Since $T_{uv} = \frac{1}{2} \Omega^2 V(\phi)$ here, the inequality becomes $2r^2 V(\phi) < 1$ in our setting.  
Now from inequality \eqref{alpha}, we have that $\alpha \geq \alpha_1 > 0$ everywhere in $\overline{\mathcal{V}}$, so in particular at points in $\del^+\mathcal{V} \cap \mathcal{Q} \subset \mathcal{A}$ we have
\[ 0 < \alpha = m - r^3 V(\phi) = \frac{r}{2}- r^3 V(\phi)  = \frac{r}{2}\left( 1 - 2r^2 V(\phi) \right). \]
Thus each connected component of $\del^+\mathcal{V} \cap \mathcal{Q}$ is indeed a smooth curve which is everywhere spacelike or outgoing-null.
 
Consider the connected component of $\del^+\mathcal{V} \cap \mathcal{Q}$ containing the outermost marginally trapped sphere $p_\ast$.
Then $p_\ast$ must constitute the inner endpoint of this curve segment; let $q_\ast = (u_\ast, v_\ast)$ denote the segment's other, outer endpoint.  
As described in Section \ref{AFevol}, we abuse notation slightly to allow the possibility that $v_\ast = \infty$.
Since $\del^+\mathcal{V} \cap \mathcal{Q}$ is smooth and consequently non-degenerate at every point, $q_\ast \notin \mathcal{Q}$.
Thus $q_\ast \in \overline{\mathcal{Q}}\setminus \mathcal{Q}$, and either $v_\ast < \infty$ or $v_\ast = \infty$.  

We wish to show that $v_\ast$ must be infinite, so suppose by way of contradiction that $v_\ast < \infty$. 
Since $\mathcal{V}$ is a past set in $\{ r \leq R \}$ and $q_\ast \in \overline{\mathcal{V}}$, we must have $I^-(q_\ast) \cap J^+(\mathcal{S}) \subset \mathcal{V}  \cup \{ r \geq R \} \subset \mathcal{R}$.
Let $u^\prime$, $v^\prime$ be such that $(u_\ast, v^\prime), (u^\prime, v_\ast) \in \mathcal{S}$.
Now, $q_\ast$ is the outer endpoint of a smooth, nonempty, achronal curve in $\mathcal{Q}$, so the global hyperbolicity of $\mathcal{Q}$ implies that the outgoing null ray to from $\mathcal{S}$ to $q_\ast$, i.e.\ $\{ u_\ast \} \times [v^\prime, v_\ast)$, must lie entirely in $\mathcal{Q}$ and hence in $\mathcal{R} \cup \mathcal{A}$. 
The ingoing null ray from $\mathcal{S}$ to $q_\ast$ need not lie entirely in $\mathcal{Q}$, however, so let 
\[u_{\ast\ast} = \sup \left\{ u \in [u^\prime, u_\ast] :  [u^\prime, u) \times \{ v_\ast \} \subset \mathcal{Q} \right\}, \]
and set $q_{\ast\ast} = (u_{\ast\ast}, v_\ast)$.
(Note that $q_\ast$ may equal $q_{\ast\ast}$.)
Then $I^-(q_{\ast\ast}) \cap  J^+(\mathcal{S}) \subset  I^-(q_\ast) \cap J^+(\mathcal{S}) \subset \mathcal{R}$, but since $\mathcal{Q}\setminus\mathcal{S}$ is open, $q_{\ast\ast} \notin \mathcal{Q}$.

Finally, fix a point $p \in \mathcal{S} \cap I^-(q_{\ast\ast})$.
By construction, $J^-(q_{\ast\ast}) \cap J^+(p) \setminus \{ q_{\ast\ast} \}$ lies in $\left(\overline{\mathcal{V}} \cup \{ r \geq R \}\right) \cap \mathcal{Q}$ and hence in $\mathcal{R} \cup \mathcal{A}$ as well.
(See Figure \ref{thmpf} for a Penrose diagram depicting these points and regions.)
But since the bound \eqref{rboundseq} of Proposition \ref{mbounds} guarantees that $q_{\ast\ast}$ lies away from the center of symmetry of $\mathcal{Q}$, the extension principle formulated in \cite{D2}, which was shown  in \cite{D3} to hold for self-gravitating Higgs fields with $V(x)$ bounded below, implies that $q_{\ast\ast} \in \mathcal{R} \cup \mathcal{A} \subset \mathcal{Q}$, a contradiction.  
So indeed we must have $v_\ast = \infty$.

\begin{figure}[hbtp]
\begin{center}
{
\psfrag{A}{\small{$\mathcal{A}$}}
\psfrag{T}{\small{$\mathcal{T}$}}
\psfrag{q*}{\small{$q_\ast$}}
\psfrag{q**}{\small{$q_{\ast\ast}$}}
\psfrag{S}{\small{$\mathcal{S}$}}
\psfrag{p}{\small{$p$}}
\psfrag{p*}{\small{$p_\ast$}}
\psfrag{p1}{\small{$(u_\ast, v^\prime)$}}
\psfrag{p3}{\small{$(u^\prime, v_\ast)$}}
\psfrag{u=u**}{\small{$u=u_{\ast\ast}$}}
\psfrag{int}{\small{$J^-(q_{\ast\ast}) \cap J^+(p) \setminus \{ q_{\ast\ast} \}$}}
\resizebox{3.1in}{!}{\includegraphics{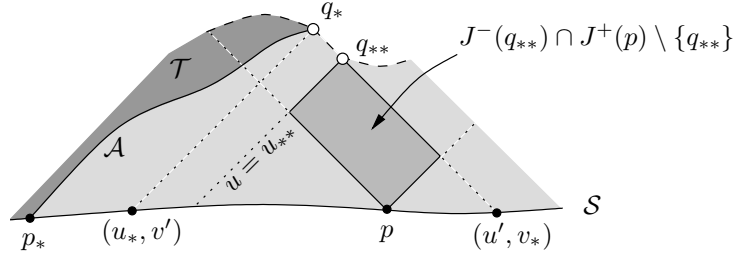}}
}
\caption{\emph{If $q_\ast$ does not have infinite $v$-coordinate, then we can find a rectangle $J^-(q_{\ast\ast}) \cap J^+(p) \setminus \{ q_{\ast\ast} \} \subset \mathcal{R} \cup \mathcal{A}$ as shown and derive a contradiction to the extension principle of \cite{D2}.}}
\label{thmpf}
\end{center}
\end{figure}

We have now shown the existence of a smooth, connected marginally trapped tube $\mathcal{A} = \del^+\mathcal{V}$, which is everywhere spacelike or outgoing-null,  intersects $\mathcal{S}$ at $p_\ast$, and exists for all $v \in [ v(p_\ast), \infty)$.  
The final piece of the proof is to show that $\mathcal{A}$ is in fact asymptotic to the event horizon $\mathcal{H}$.
Now, $\mathcal{H}$ is an outgoing null ray $\{ U \} \times [ V, \infty)$, where $\{ (U,V) \} = \mathcal{H} \cap \mathcal{S}$.
Recall that $r_+ := \sup_{p \in \mathcal{H}}r(p)$; in fact, since $\del_v r \geq 0$ on $\mathcal{H}$, we have $r_+ = \lim_{v\rightarrow\infty} r(U,v)$.
Furthermore, Lemma 3 of \cite{D2} shows that $r \leq 2 M_f$ on $\mathcal{H}$, where $M_f$ is the final Bondi mass. 
Then since $M_f \leq M$ and $\lambda > 2$, we have $r_+ \leq 2 M_f \leq 2M < \lambda M = R$.
In particular, this implies that the timelike curve $\{ r = R \}$ lies in the exterior of the black hole $\mathcal{B}$ and tends to $i^+$, never intersecting $\mathcal{H}$.
Since $\mathcal{H} \subset \mathcal{R} \cup \mathcal{A}$ and $\mathcal{V} = \mathcal{R} \cap \{ r \leq R \}$, it follows that $\mathcal{H} \subset \overline{\mathcal{V}}$.
Also, since $\del_u r < 0$ in $\mathcal{Q}$, we have $r \leq r_+$ everywhere in $J^+(\mathcal{H}) \cap \mathcal{Q}$.

We showed above that the outer ``endpoint" of the marginally trapped tube $\mathcal{A}$ is $q_\ast = (u_\ast, \infty)$.
One way of interpreting the idea that $\mathcal{A}$ should be asymptotic to $\mathcal{H}$ in this regime is to require that $u_\ast = U$; then $\mathcal{A}$ and $\mathcal{H}$ are asymptotic as curves in the underlying $\mathbb{M}^{1+1}$.
Alternately, perhaps more geometrically, one might instead require that $\lim_{p \rightarrow q_\ast, \, p \, \in \mathcal{A} } \left( r(p) \right) = r_+$. 
(One readily computes that $r$ is non-decreasing along $\mathcal{A}$ toward $q_\ast$, and since $r \leq r_+$ on $\mathcal{A}$, this limit must exist.)
Note that both of these interpretations of ``asymptotic" allow the possibility that $\mathcal{A}$ coincides with $\mathcal{H}$ for all $\widetilde{v} \geq v^\prime$, some $v^\prime \geq V$.
We shall show that $\mathcal{A}$ is indeed asymptotic to $\mathcal{H}$ in both of these senses; both follow directly from the bootstrap estimates in $\mathcal{V}$ and equation \eqref{druv} below.

For the first statement, we argue by contradiction: suppose $u_\ast > U$, and fix a value $U < \overline{u} < u_\ast$.
By construction, the infinite rectangle $[ U, \overline{u} ] \times [V, \infty) \subset \mathcal{V}$.
Now, differentiating \eqref{mdef} with respect to $u$, combining the result with equations  \eqref{eq1} and \eqref{eq3}, and finally solving for $\del^{2}_{uv} r$, we obtain
\begin{equation}
\del^{2}_{uv} r  =  -\textstyle\frac{1}{2}\Omega^2 r^{-2} \left( m - r^3V(\phi) \right) =  2  \kappa r^{-2} (\del_u r)\alpha.  \label{druv}
\end{equation}
Since $r(u,v) \leq r_+$ in $[ U, \overline{u} ] \times [V, \infty) \subset \mathcal{V}$,  rearranging \eqref{druv} and using this bound together with the fact that inequalities \eqref{kappa} and \eqref{alpha} hold in $\mathcal{V}$, we have
\begin{eqnarray*} \del_v \log (-\del_u r) & = & 2 \kappa r^{-2} \alpha \\
& > & 2 \kappa_1 r_+^{-2} \alpha_1
\end{eqnarray*}
everywhere in  $[ U, \overline{u} ] \times [V, \infty)$.
For any $U \leq u \leq \overline{u}$, integrating along an outgoing null segment $\{ u \} \times [V, v]$ yields
\begin{equation*}  - \del_u r(u,v) > - \del_u r(u,V) \,e^{2 \kappa_1 r_+^{-2} \alpha_1 (v-V)}. \end{equation*}
Since $[ U, \overline{u}] \times \{ V \}$ is compact, we may assume $- \del_u r(u,V) \geq b$ for all $U \leq u \leq \overline{u}$, some $b > 0$, so for all $(u,v) \in [U, \overline{u}] \times [V, \infty)$, we have
\[ - \del_u r(u,v) > b e^{2 \kappa_1 r_+^{-2} \alpha_1 (v-V)}. \]
Now integrating this along an ingoing null ray segment $[U, \overline{u}] \times \{ v \}$ for any $v \geq V$, we arrive at
\[ r(\overline{u},v) < r(U,v) - b e^{2 \kappa_1 r_+^{-2} \alpha_1 (v-V)} (\overline{u}-U). \]
But since $\overline{u} - U > 0$ and $r(U,v) \rightarrow_{v\rightarrow\infty} r_+ < \infty$, the right-hand side tends to $-\infty$ as $v \rightarrow \infty$, while the left-hand side remains positive, a contradiction.
So in fact $u_\ast = U$, and  $\mathcal{A}$ and $\mathcal{H}$ are asymptotic as curves in $\mathbb{M}^{1+1}$.

To show that $\mathcal{A}$ and $\mathcal{H}$ are asymptotic in the second sense described above, we again argue by contradiction:  
parametrize $\mathcal{A}$ by $\{ (u(v), v) \}$, $v(p_\ast) \leq v < \infty$ --- this is possible since $\mathcal{A}$ is everywhere spacelike or outgoing-null --- and suppose that $r(u(v), v) \rightarrow r_+ - \delta$ as $v \rightarrow \infty$, some constant $\delta > 0$.
Then we again apply bounds \eqref{kappa} and \eqref{alpha},  integrate \eqref{druv} along the ingoing null segment $[U, u(v)] \times \{ v \}$,   and use the fact that $\del_v r (u(v), v) \equiv 0$ to obtain the inequality
\[ \del_v r (U, v) > 2  \kappa_1 r_+^{-2} \alpha_1 \left( - r(u(v), v) + r(U,v) \right) \]
for all $v \geq V$.
Taking the $\liminf$ of both sides as $v \rightarrow \infty$, we have
\[ \liminf_{v\rightarrow \infty} \left( \del_v r (U, v) \right) > 2 \kappa_1 r_+^{-2} \alpha_1 \left( - (r_+ - \delta) + r_+ \right) =  2 \kappa_1 r_+^{-2} \alpha_1 \delta > 0, \]
a contradiction to the fact that $\lim_{v\rightarrow \infty} \left( r(U,v) \right) =  r_+ < \infty$.
Thus $\mathcal{A}$ and $\mathcal{H}$ do indeed tend to the same limiting radius as $v\rightarrow\infty$ as well.

\medskip

%
%
%

We have now completed the proof of the theorem except for the bootstrap itself.  
That is, it remains to show, one, that strict inequalities \eqref{redshift}-\eqref{drdv} hold along $\mathcal{S}_o  = \mathcal{S} \cap \{ r_1 \leq r \leq R \}$, and two, that if non-strict versions of \eqref{redshift}-\eqref{drdv} hold for all points  within $\{ r \leq R \}$ to the causal past of a point $(u,v) \in \mathcal{R} \cap \{ r \leq R \}$, then in fact strict versions of the six inequalities must hold at $p$. 

\medskip
First, from \eqref{Cdef}, \eqref{Lambdadef}, \eqref{kdef}, \eqref{rhodef}, and  \eqref{C1}, we have $C_1 > C$, so it follows from \eqref{lambdadef2} that inequality \eqref{redshift} is satisfied on $\mathcal{S}_o$.
To see that \eqref{phibound} holds on $\mathcal{S}_o$, we apply \eqref{phicond}, \eqref{r1inf}, \eqref{mcond}, \eqref{rhodef}, and \eqref{Chat}:
\begin{equation} |\phi| \leq \frac{\epsconst}{\sqrt{r}} \leq \frac{\epsconst}{\sqrt{r_1}} = \frac{\epsconst}{\sqrt{2m_1}}
\leq \frac{\epsconst}{\sqrt{2\rhoconst M}} < \frac{\epsconst}{\sqrt{2 \sigma^2 M}} < \widehat{C}. \label{phiSo}
\end{equation}
That \eqref{kappa} is satisfied along all of $\mathcal{S}$ follows immediately from \eqref{kappa0def} and  \eqref{kappa1}.
And by construction, $\mathcal{S}_o \subset \mathcal{R}$, so clearly \eqref{drdv} holds on $\mathcal{S}_o$ as well.

In order to see that $\mathcal{S}_o \subset \mathcal{V}$, it remains to show that the bounds \eqref{potential} and \eqref{alpha} are satisfied.
In fact, we show something much stronger: that strict inequalities \eqref{potential} and \eqref{alpha} hold wherever a non-strict version of \eqref{phibound} does. 

First observe that since $V^\prime(0) = 0$, the mean value theorem says that for any $x \neq 0$,  there exists $\xi$ with $0 < |\xi| < |x|$ such that
\[ \left| \frac{V^\prime(x)}{x} \right| = \left| V^{\prime\prime}(\xi) \right| \leq B, \]
and similarly, since $V(0) = 0$, there exists $\xi^\prime$ with $0 < |\xi^\prime| < |x|$ such that
\[ \frac{V(x)}{|x|} = |V^{\prime}(\xi^\prime)| \leq B |\xi^\prime| < B|x|,  \]
so that
\[ V(x) \leq B x^2 \]
for all $x$. 

Next, note that	\eqref{Chat} and \eqref{kdef2} together imply that $\widehat{C} < \left( 16 M^2 B k^3 \right)^{-\frac{1}{2}}$.
Thus at any point in $\overline{\mathcal{R}} \cap \{r \leq R \}$ where a non-strict version of \eqref{phibound} holds, we compute that
\begin{eqnarray*} \alpha & = & m - r^3 V(\phi) \\
& \geq & m_1 - R^3 B \phi^2 \\
& \geq & \rhoconst M - (\lambda M)^3 B \widehat{C}^2 \\
& >    & \rhoconst M - \lambda^3 M^3 B \left(\frac{1}{16M^2 B k^3} \right)\\
& =    & \rhoconst M - \frac{1}{2}\left(\displaystyle\frac{\lambda}{2k} \right)^3 M  \\
& > & \rhoconst M - \textstyle\frac{1}{2} \rhoconst^3 M  \\
& > & \rhoconst M - \textstyle\frac{1}{2}\rhoconst M  \\
& = & \alpha_1,
\end{eqnarray*}
where in addition to \eqref{phibound} we have used \eqref{rhodef} and the fact that $\rhoconst > \rhoconst^3$ (since $\rhoconst < 1$).
Also, 
\begin{eqnarray*} | V^\prime(\phi) | & \leq & B | \phi | \\
& \leq & B \widehat{C} \\
& < &  B \left(\frac{1}{4M \sqrt{B k^3}} \right) \\
& < & \frac{\sqrt{B}}{4M} \\
& = & \overline{C}.
\end{eqnarray*}
where in the second to last line we used the fact that $k > 1$.
Thus \eqref{potential} and \eqref{alpha} hold on $\mathcal{S}_o$, since we have already shown that \eqref{phibound} does.  
Since the six inequalities \eqref{redshift}-\eqref{drdv} are all satisfied along $\mathcal{S}_o$, we have $\mathcal{S}_o \subset \mathcal{V}$, and $\mathcal{V} \neq \emptyset$.

\medskip

%
%
%

It remains to retrieve \eqref{redshift}-\eqref{drdv} in $\overline{\mathcal{V}} \cap \mathcal{R}$.
Clearly \eqref{drdv} is automatically satisfied by definition of $\mathcal{R}$, so the five nontrivial inequalities to consider are \eqref{redshift}-\eqref{kappa}.

We begin with \eqref{kappa}.
To retrieve the desired inequality in $\overline{\mathcal{V}} \cap \mathcal{R}$, we estimate $\kappa$ in two pieces, first using an energy estimate to get a stronger bound on $\mathcal{S}_o \cup \{ r = R \}$, then integrating in $u$ and using the bootstrap assumptions to obtain the desired one in $\overline{\mathcal{V}}$.

For the bound on $\{ r = R \}$, first recall that by hypothesis, we have $\kappa \geq \kappa_0$ on all of $\mathcal{S}$. 
Also observe that in the exterior region $\{ r \geq R = \lambda M \}$, we have
\begin{equation} \label{fracbound} 1-\frac{2m}{r} \geq 1-\frac{2M}{R} = 1-\frac{2}{\lambda} > 0.  \end{equation}
Now, rearranging equation \eqref{eq1}, we obtain
\begin{equation} \del_u \kappa = \left( \frac{r\del_u \phi}{\del_u r}\right)^2 \left( \frac{\del_u r}{r} \right) \kappa, \label{dukappa}
\end{equation}
and integrating along the ingoing null ray $[ u^\prime, u] \times \{ v \}$, where the point $(u,v) \in \{ r = R = \lambda M \}$ and $(u^\prime, v ) \in \mathcal{S}$, we apply \eqref{fracbound}, the energy estimate \eqref{u-energy}, as well as \eqref{mcond}, \eqref{duneg}, and \eqref{elldef}, to get
\begin{align*} \kappa(u,v)  \,\,\, = & \quad
\kappa(u^\prime, v) \exp{\left[\int_{u^\prime}^u  \left( 1-\frac{2m}{r}\right) \left( \frac{r\del_u \phi}{\del_u r}\right)^2  \del_u r  \cdot
\frac{1}{\left( 1-\frac{2m}{r} \right) r} (\tilde{u}, v) d\tilde{u} \right]} \\
 \geq & \quad
\kappa_0 \exp{\left[\frac{1}{\left( \lambda-2 \right) M}  \int_{u^\prime}^u  \left( 1-\frac{2m}{r}\right) \left( \frac{r\del_u \phi}{\del_u r}\right)^2  \del_u r  
(\tilde{u}, v) d\tilde{u} \right]} \\
 \geq & \quad
\kappa_0 \exp{\left[-\frac{2\left(1 - \frac{m_1}{M}\right)}{\left( \lambda-2 \right)}  \right]} \\
\geq & \quad
\kappa_0 \cdot e^{- 2\ell/\lambda  }.
\end{align*}
Since $\kappa \geq \kappa_0$ on $\mathcal{S}$ and $e^{- 2\ell/\lambda  } < 1$, we thus have $\kappa \geq \kappa_0 \cdot e^{- 2\ell/\lambda}$ everywhere on $\mathcal{S}_o \cup \{ r = R \}$.

Now, given any point $(u,v) \in \overline{\mathcal{V}} \cap \mathcal{R}$, the ingoing null ray to the past of $(u,v)$ must intersect $\overline{\mathcal{S}}_o \cup \{ r = R \}$, say at the point $(u^{\prime\prime}, v)$.
Again we integrate \eqref{dukappa},  this time applying the bootstrap inequality \eqref{redshift} along with \eqref{duneg} and \eqref{C1} to obtain 
\begin{align*} \kappa(u,v) \quad = & \quad \kappa(u^{\prime\prime}, v) \exp{\left[ \int_{u^{\prime\prime}}^u \left( \frac{r \del_u \phi}{\del_u r} \right)^2 \left( \frac{\del_u r}{r} \right)(\bar{u}, v)  d\bar{u} \right]} \nonumber\\
 \geq & \quad\kappa_0 \cdot e^{- 2\ell/\lambda  } \exp \left[ C_1^2 \,\log\left( \frac{r(u,v)}{r(u^{\prime\prime}, v)}\right) \right]  \nonumber \\
 \geq & \quad\kappa_0 \cdot e^{- 2\ell/\lambda  } \exp \left[ C_1^2 \,\log\left( \frac{2\rhoconst}{\lambda}\right) \right]  \nonumber \\
 = & \quad\kappa_0 \cdot e^{- 2\ell/\lambda  } \exp \left[ \left( \frac{\log 2}{2 \log (\frac{\lambda}{2\rhoconst}) } \right) \,\log\left( \frac{2\rhoconst}{\lambda}\right) \right]  \nonumber \\
 = & \quad\kappa_0 \cdot e^{- 2\ell/\lambda  } \cdot \frac{1}{\sqrt{2}}  \nonumber \\
 > & \quad \kappa_1.
\end{align*}

Thus we have retrieved and in fact improved \eqref{kappa} in $\overline{\mathcal{V}} \cap \mathcal{R}$.
Separately, note also that it follows from \eqref{dukappa} that $\del_u \kappa < 0$ everywhere in $\mathcal{Q}$ (since $\kappa > 0$  by \eqref{duneg}), and thus our assumption that $\kappa \leq 1$ on $\mathcal{S}$ implies that $\kappa \leq 1$ in all of $\mathcal{Q}$.

\medskip

%
%
%

Next, as was shown previously, in order to retrieve \eqref{phibound}, \eqref{potential}, and \eqref{alpha} in $\overline{\mathcal{V}} \cap \mathcal{R}$, it suffices to retrieve \eqref{phibound}.
As with \eqref{kappa}, retrieving \eqref{phibound} requires two steps:  we first estimate $|\phi|$ on $\mathcal{S}_o \cup \{ r = R \}$, again utilizing an energy estimate, then use that bound to obtain $|\phi| < \widehat{C}$ in $\overline{\mathcal{V}} \cap \mathcal{R}$.

From \eqref{phiSo}, we know already that $|\phi| <  \epsconst/\sqrt{2 \sigma^2 M}$ on $\mathcal{S}_o$.  
To see that the same bound holds along $\{ r = R \}$, we integrate $|\del_u \phi|$ along an ingoing null ray emanating from $\mathcal{S}$.
Suppose $(u,v^\prime) \in \{ r = R \}$ and $\{ (u^\prime, v^\prime) \} = \mathcal{S} \cap \{ v = v^\prime \}$.  
Then since the ingoing null ray segment $[u^\prime, u] \times \{ v^\prime \} \subset \{ r \geq R = \lambda M \} \subset \mathcal{R}$, integrating $\del_u \phi$ along it and applying \eqref{phicond}, Cauchy-Schwarz, \eqref{fracbound}, the energy estimate \eqref{u-energy}, as well as \eqref{elldef}, \eqref{Ldef}, \eqref{sigmadef} and \eqref{lambdadef},  yields
\begin{align*}  |\phi(u,v)|  \,\,\, \leq & \quad |\phi(u^\prime,v)| + \int_{u^\prime}^u |\del_u \phi (\tilde{u}, v) | d\tilde{u}   \\
 \leq & \quad \epsconst \cdot r(u^\prime,v)^{-\frac{1}{2}} \\
& \qquad + \sqrt{\int_{u^\prime}^u - \left(1 - \frac{2m}{r}\right)\left( \frac{r \del_u \phi}{\del_u r} \right)^2 \del_u r (\tilde{u}, v)  d\tilde{u}}  \\
& \qquad\quad \cdot \,\sqrt{ \int_{u^\prime}^u - \frac{\del_u r}{\left(1-\frac{2m}{r}\right)r^2} (\tilde{u}, v)  d\tilde{u}  } \\
 \leq & \quad \epsconst \cdot r(u^\prime,v)^{-\frac{1}{2}} 
+ \sqrt{2(M-m_1)} \sqrt{ \frac{1}{1-\frac{2}{\lambda}}\int_{u^\prime}^u - \frac{\del_u r}{r^2} (\tilde{u}, v)  d\tilde{u}  } \\
 < & \quad \epsconst \cdot r(u^\prime,v)^{-\frac{1}{2}} 
+ \sqrt{2(M-m_1)} \sqrt{ \frac{1}{1-\frac{2}{\lambda}} \cdot \frac{1}{R}   } \\
 \leq  & \quad \frac{\epsconst}{\sqrt{\lambda M}} + \sqrt{ \frac{2(1-\frac{m_1}{M})}{\lambda-2}  } \\
\leq & \quad \frac{\epsconst}{\sqrt{\lambda M}} + \sqrt{ \frac{2\ell}{\lambda}  } \\
< & \quad \frac{1}{\sqrt{\lambda M}}\left( \epsconst + \sqrt{ 2 L M   }  \right) \\
< & \quad \frac{\epsconst }{\sigma \sqrt{2 M}}.
\end{align*}
Thus we have $|\phi| < \displaystyle\frac{\epsconst }{\sigma \sqrt{2 M}}$ everywhere along $\mathcal{S}_o \cup \{ r = R \}$.

Now consider a point $(u,v) \in \overline{\mathcal{V}} \cap \mathcal{R}$ with past ingoing null ray intersecting $\overline{\mathcal{S}}_o \cup \{ r = R \}$ at the point $(u^{\prime\prime}, v)$. 
Then:
\begin{align*} \left| \phi(u,v) \right|  \quad \leq & \quad \left| \phi(u^{\prime\prime},v) \right| + \int_{u^{\prime\prime}}^u |\del_u \phi  (\bar{u}, v) | d\bar{u} \nonumber \\ 
 < & \quad \frac{\epsconst }{\sigma \sqrt{2 M}} + \int_{u^{\prime\prime}}^u \left|\frac{r\del_u \phi}{\del_u r}\right| \left|\frac{\del_u r}{r}\right| (\bar{u}, v)  d\bar{u} \nonumber \\
 \leq & \quad \frac{\epsconst }{\sigma \sqrt{2 M}} + C_1 \log \left(\frac{r(u^{\prime\prime}, v)}{r(u,v)}\right)   \nonumber \\
 \leq & \quad \frac{\epsconst }{\sigma \sqrt{2 M}} + C_1 \log \left(\frac{R}{r_1}\right) \nonumber \\
    = & \quad \frac{\epsconst }{\sigma \sqrt{2 M}} + \sqrt{ \frac{\log 2}{2 \log \left( \frac{\lambda}{2 \rhoconst} \right) } } \cdot \log \left(\frac{\lambda }{2 \rho}\right) , \nonumber  \\
    = & \quad \frac{\epsconst }{\sigma \sqrt{2 M}} + \sqrt{ \textstyle\frac{1}{2}\log{2} \log{\left( \frac{\lambda}{2 \rhoconst} \right)}  }  , \nonumber \\
    < & \quad \frac{\epsconst }{\sigma \sqrt{2 M}} + \sqrt{ \textstyle\frac{1}{2}\log{2} \log{k}  }   \nonumber \\
    = & \quad \widehat{C},
\end{align*}
where we have used \eqref{redshift},  \eqref{C1}, \eqref{rboundseq} and \eqref{rhodef}, in addition to our bound for $|\phi|$ on $\mathcal{S}_o \cup \{ r = R \}$.  
Thus \eqref{phibound} holds everywhere in $\overline{\mathcal{V}} \cap \mathcal{R}$, and hence so do \eqref{potential} and \eqref{alpha}.

\medskip

%
%
%

Finally, it remains only to retrieve \eqref{redshift}.  
Combining equations \eqref{sfequv} and \eqref{druv} and rearranging terms, we derive
\begin{equation} \label{redshiftevol} \del_v \left( \frac{r \del_u \phi}{\del_u r} \right)
 = -\frac{2\kappa\alpha}{r^2} \left( \frac{r \del_u \phi}{\del_u r} \right) + r\kappa V^\prime(\phi) - \del_v \phi.
\end{equation} 
Now, the past boundary of $\mathcal{V}$ lies in $\left( \mathcal{S} \cap \{ r_1 \leq r \leq R \} \right) \cup \{ r = R\}$, and since $\{ r = R\}$ is everywhere timelike, the outgoing null segment to the past of a point $(u, v) \in \overline{\mathcal{V}}$  must intersect $\overline{\mathcal{S}_o}$, say at the point $(u, v^\prime)$. 
Then integrating \eqref{redshiftevol} along the null segment $\{ u \} \times [ v^\prime, v ] \subset \overline{\mathcal{V}}$, we have
\begin{align}
\left| \frac{r \del_u \phi}{\del_u r} (u,v)\right|  \quad \leq & \quad e^{\int_{v^\prime}^v -\frac{2\kappa \alpha}{r^2}(u,\tilde{v})d\tilde{v}} \left| \frac{r \del_u \phi}{\del_u r} (u, v^\prime) \right| \nonumber\\
& \quad + \int_{v^\prime}^v e^{\int_{\bar{v}}^v -\frac{2\kappa \alpha}{r^2}(u,\tilde{v})d\tilde{v} } \left| r\kappa V^\prime(\phi) - \del_v\phi \right|(u, \bar{v}) d\bar{v} \nonumber \\
 \leq & \quad C 
+ \int_{v^\prime}^v e^{-\frac{2\kappa_1 \alpha_1}{R^2}(v - \bar{v})} \left( R\overline{C} +  |\del_v\phi| \right)(u, \bar{v}) d\bar{v} \nonumber \\
 \leq & \quad C + \frac{\overline{C} R^3}{2\kappa_1 \alpha_1} \left( 1 -  e^{-\frac{2\kappa_1 \alpha_1}{R^2}(v - v^\prime)} \right)  \nonumber \\
& \quad + \sqrt{\int_{v^\prime}^v  e^{-\frac{4\kappa_1 \alpha_1}{R^2}(v - \bar{v})} r^{-2}\kappa  (u, \bar{v}) d\bar{v}} 
\sqrt{\int_{v^\prime}^v {(\del_v\phi)^2 r^2 \kappa^{-1}  (u, \bar{v}) d\bar{v} }} \nonumber \\
 \leq & \quad C +   \frac{\overline{C}R^3}{2\kappa_1 \alpha_1} \nonumber \\
& \quad + \sqrt{r_1^{-2} \cdot  \frac{R^2}{4\kappa_1 \alpha_1} \left( 1 - e^{-\frac{4\kappa_1 \alpha_1}{R^2}(v - v^\prime)} \right) }
\cdot \sqrt{2(M - m_1)} \nonumber \\
 \leq & \quad C + \frac{\overline{C} R^3}{2\kappa_1 \alpha_1} + \sqrt{ \frac{R^2(M - m_1)}{2r_1^2\kappa_1 \alpha_1}}, \label{est1}
\end{align}
where we have used \eqref{lambdadef2}, the bootstrap inequalities \eqref{kappa}, \eqref{alpha}, and \eqref{potential},  Cauchy-Schwarz, \eqref{rboundseq}, the energy estimate \eqref{v-energy}, and the fact that $\kappa \leq 1$.
Now, the second term here is
\begin{eqnarray*} \frac{\overline{C} R^3}{2\kappa_1 \alpha_1} 
& = & \frac{\sqrt{B} }{ 4M } \cdot \frac{(\lambda M)^3}{ 2(\frac{1}{2} \kappa_0 e^{-2\ell/\lambda} )(\frac{1}{2} \rhoconst M) } \\
& = &  \frac{\sqrt{B} M}{\kappa_0 e^{-2\ell/\lambda} }\left( \frac{\lambda^3}{2 \rhoconst} \right) \\
& < &  \frac{\sqrt{B} M e^{2L/\lambda} }{\kappa_0 }\left( k \lambda^2  \right) \\
& < &  \frac{9\sqrt{B} M e^L }{\kappa_0 }, 
\end{eqnarray*}
where we have applied \eqref{Coverline}, \eqref{Rdef}, \eqref{kappa1}, \eqref{alpha1}, \eqref{rhodef}, and the facts that $\ell < L$, $\lambda > 2$, $k < 1$, and $\lambda < 3$;
and the third one is
\begin{eqnarray*}  \sqrt{ \frac{R^2(M - m_1)}{2r_1^2\kappa_1 \alpha_1}} 
& \leq & \sqrt{ \frac{(\lambda M)^2(M - \rhoconst M)}{2(2\rhoconst M)^2 (\frac{1}{2} \kappa_0 e^{-2\ell/\lambda} ) (\frac{1}{2} \rhoconst M) }} \\
& = & \frac{\lambda }{2\rhoconst } \cdot \sqrt{ \frac{2(1 - \rhoconst) e^{2\ell/\lambda} }{\rhoconst \kappa_0  }}   \\
& < & \frac{\lambda }{2\rhoconst } e^{L/2}   \sqrt{ \frac{2(1 - \rhoconst) }{\rhoconst \kappa_0  }}   \\
& < & k e^{L/2}   \sqrt{ \frac{2L }{ \sigma^2 \kappa_0  }}   \\
& < & \frac{ 3 e^{L/2} }{ \sigma }  \sqrt{ \frac{ L }{ 2 \kappa_0  }},
\end{eqnarray*}
where this time we used \eqref{Rdef}, \eqref{kappa1}, \eqref{alpha1}, $\ell < L$, $\lambda > 2$, various consequences of \eqref{rhodef}, and $k < \frac{3}{2}$.
Thus, resuming from \eqref{est1}, we have
\begin{eqnarray*}
\left| \frac{r \del_u \phi}{\del_u r} (u,v)\right| 
& < & C + \frac{9 \sqrt{B} M e^L }{\kappa_0 } + \frac{ 3 e^{L/2} }{ \sigma }  \sqrt{ \frac{ L }{ 2 \kappa_0  }} \\
& = & \Lambda \\
& < & \sqrt{\frac{\log 2 }{2 \log k }} \\
& < & \sqrt{\frac{\log 2 }{2 \log \left( \frac{\lambda}{2\rho} \right) }} \\
& = & C_1,
 \end{eqnarray*}
having applied \eqref{Lambdadef}, \eqref{kdef}, and \eqref{rhodef}.
So \eqref{redshift} is retrieved in $\overline{\mathcal{V}} \cap \mathcal{R}$.

\medskip

%
%
%

We have now retrieved all of \eqref{redshift}-\eqref{drdv} in $\overline{\mathcal{V}} \cap \mathcal{R}$, completing the bootstrap and hence the proof.

\end{proof}

%
%
%

\section*{Acknowledgments}

The author wishes to thank Mihalis Dafermos for several very helpful conversations and suggestions, and Jan Metzger for pointing out a small but important error in a previous version of the paper.
Also, the author is grateful to the Mittag-Leffler Institute in Djursholm, Sweden, for providing support and an excellent research environment for this work during the Fall 2008 program \emph{Geometry, Analysis, and General Relativity}.

\bibliography{mybib}{}
\bibliographystyle{amsplain}

\end{document}